 \let\MYoriglatexcaption\caption
 \renewcommand{\caption}[2][\relax]{\MYoriglatexcaption[#2]{#2}}
\theoremstyle{definition}
\newtheorem{definition}{Definition}%[section]
\begin{document}
%
% paper title
% Titles are generally capitalized except for words such as a, an, and, as,
% at, but, by, for, in, nor, of, on, or, the, to and up, which are usually
% not capitalized unless they are the first or last word of the title.
% Linebreaks \\ can be used within to get better formatting as desired.
% Do not put math or special symbols in the title.
\title{Variational Multi-Task MRI Reconstruction: Joint Reconstruction, Registration and Super-Resolution}
%\title{Powerful multi-tasking for curious 
%MR reconstruction : a joint variational model for reconstruction, registration and super-resolution (RRSR)}
%% More options for title are:
% a-  
% b- 
% c-

%
%
% author names and IEEE memberships
% note positions of commas and nonbreaking spaces ( ~ ) LaTeX will not break
% a structure at a ~ so this keeps an author's name from being broken across
% two lines.
% use \thanks{} to gain access to the first footnote area
% a separate \thanks must be used for each paragraph as LaTeX2e's \thanks
% was not built to handle multiple paragraphs
%

\author{{Veronica Corona,~Angelica I. Aviles-Rivero,~No\'emie Debroux,~Carole Le Guyader,~\\Carola-Bibiane Sch\"onlieb}% <-this % stops a space
\thanks{V Corona, N Debroux and CB Sch\"onlieb are with the Department of Applied Mathematics and Theoretical Physics, University of Cambridge, UK. \{vc324,nd448,cbs31\}@cam.ac.uk}
\thanks{
AI Aviles-Rivero is with the Department of Pure Mathematics and Mathematical Statistics, University of Cambridge, UK.  ai323@cam.ac.uk }
\thanks{
C Le Guyader is with the Normandie Universit\'e, INSA de Rouen, France. carole.le-guyader@insa-rouen.fr}
%of Electrical and Computer Engineering, Georgia Institute of Technology, Atlanta,
%GA, 30332 USA e-mail: (see http://www.michaelshell.org/contact.html).}% <-this % stops a space
%\thanks{J. Doe and J. Doe are with Anonymous University.}% <-this % stops a space
%\thanks{Manuscript received April 19, 2005; revised August 26, 2015.}
}

\newtheorem{theorem}{Theorem}[section]

% make the title area
\maketitle

% As a general rule, do not put math, special symbols or citations
% in the abstract or keywords.
\begin{abstract}
Motion degradation is a central problem in Magnetic Resonance Imaging (MRI). This work addresses the problem of how to obtain higher quality, super-resolved motion-free, reconstructions from highly undersampled MRI data. 
%reduce MR acquisition time while generating high quality images for better clinical
%This work addresses the fundamental question: \textit{How to reduce MR acquisition time while generating high quality images for better clinical throughput?} %aims at producing a high quality image by compensating for motion and acceleration artefacts in a joint reconstruction, super-resolution and registration framework. 
In this work, we present for the first time a variational multi-task framework that allows joining three relevant tasks in MRI: reconstruction, registration and super-resolution. Our framework takes a set of multiple undersampled MR acquisitions corrupted by motion into a novel multi-task optimisation model, which is composed of an $L^2$ fidelity term that allows sharing representation between tasks, super-resolution foundations and hyperelastic deformations to model biological tissue behaviors. We demonstrate  that  this  combination yields to significant improvements over sequential models and other bi-task methods. Our results exhibit fine details and compensate for motion producing sharp and highly textured images compared to state of the art methods.
%This work addresses a central topic in dynamic Magnetic Resonance Imaging (MRI) which is the acceleration of acquisition time while preserving a high quality image and compensating for motion in a joint reconstruction, super-resolution and registration framework. From a set of multiple under-sampled MR acquisitions corrupted by motion, we aim at - \textit{jointly} - reconstructing a super-resolved single motion-free corrected image and retrieving the physiological dynamics through the deformation maps.
%\vspace{4cm}
\end{abstract}

% Note that keywords are not normally used for peerreview papers.about five words 
\begin{IEEEkeywords}
MRI Reconstruction, Image Registration, Image Super-resolution, Motion Correction
\end{IEEEkeywords}
%2D Registration, Reconstruction, Super-resolution, Joint Model, Motion Correction, Magnetic Resonance Imaging, Nonlinear Elasticity, Weighted Total Variation %MRI reconstruction 

%we need only 3 or 4 keywords according to TMI website

% For peer review papers, you can put extra information on the cover
% page as needed:
% \ifCLASSOPTIONpeerreview
% \begin{center} \bfseries EDICS Category: 3-BBND \end{center}
% \fi
%
% For peerreview papers, this IEEEtran command inserts a page break and
% creates the second title. It will be ignored for other modes.
\IEEEpeerreviewmaketitle

\section{Introduction}
% The very first letter is a 2 line initial drop letter followed
% by the rest of the first word in caps.
% 
% form to use if the first word consists of a single letter:
% \IEEEPARstart{A}{demo} file is ....
% 
% form to use if you need the single drop letter followed by
% normal text (unknown if ever used by the IEEE):
% \IEEEPARstart{A}{}demo file is ....
% 
% Some journals put the first two words in caps:
% \IEEEPARstart{T}{his demo} file is ....
% 
% Here we have the typical use of a "T" for an initial drop letter
% and "HIS" in caps to complete the first word.
\IEEEPARstart{M}{}agnetic Resonance Imaging (MRI) is a widely used and non-invasive modality that creates detailed images of the anatomical structures of the human body, including undergoing physiological events. It allows radiologists to examine MRI for diagnosis, treatment monitoring and abnormality/disease detection~\cite{brown2015mri}. 
However, a central limitation of MRI is the prolonged acquisition period needed to reconstruct an image~\cite{zaitsev2015motion}.  This constraint is reputed to be a major contributor to image quality degradation, and therefore, compromising the expert interpretation.

Image degradation appears as motion artefacts including blurring effects and geometric distortions~\cite{sachs1995diminishing,zaitsev2015motion}. Therefore, the problem of how to reduce the acquisition time whilst producing high quality images, super-resolved and motion-free, is of a great interest in the community, and it is the problem that we address in this paper. 

In particular, in a dynamic MRI setting, acquisitions with low signal-to-noise ratios or small anatomical structures might be severely degraded, affecting the final expert's outcome~\cite{havsteen2017movement}. These small structures can appear smeared or blurred, and discerning whether these are artefacts or lesions is very challenging for the expert, leading to potential false positive or negative findings \cite{Andre2015}. Moreover, movement distortions are most prominent at contrast edges \cite{Birn2004}.

\begin{figure}[t!]
    \centering
    \includegraphics[width=0.5\textwidth]{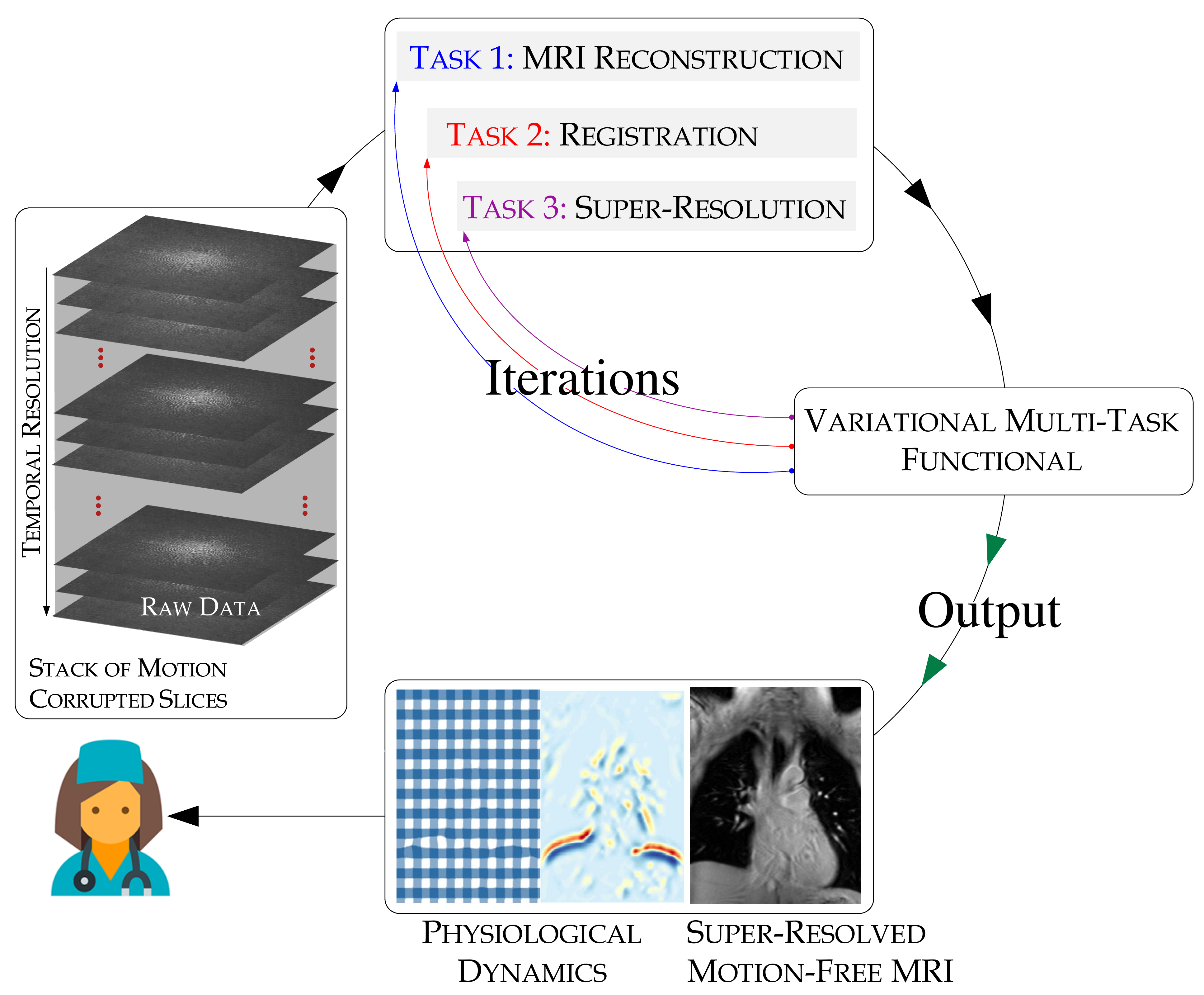}
    \caption{The proposed variational multi-task framework. A set of higly undersampled MRI measurements are taken as input to  our three-tasks framework: reconstruction, registration and super-resolution. We then jointly address them using a proposed functional that has as input a super-resolved motion-free MRI and the physiological dynamics.}
    \label{fig:teaser}
    \vspace{-0.4cm}
\end{figure}

Although, it is possible to reduce the artefacts by performing breath-holding techniques, there is still residual motion to be compensated. This is mainly produced because the timescale of physiological motion is shorter than the required time to form an image. Likewise, gating strategies~\cite{Setser2000,Plein2001,Kaji2001,GEORGE2006924,JIANG2006141,Keall_2000}, which track either the breathing or cardiac cycles, have been also widely explored. However, they are mainly effective for perpetual breathing motion disregarding all other involuntary physiological motion and therefore only partially accurate. Furthermore, it is challenging to precisely co-register these signals to the corresponding MRI data \cite{HOISAK2006339}. 

As an alternative to the aforementioned techniques, a body of research has developed several algorithmic approaches based on the conceptual definition of  Compressed Sensing (CS) which has demonstrated promising results since the seminal paper of Lustig et al.~\cite{lustig2007sparse}. The main idea of using CS is to reconstruct signals from low-dimensional measurements through iterative optimisation relying on sparsity of the image in a transformed domain. Since then, several promising results have been reported in the body of literature e.g.~\cite{liang2007spatiotemporal}, \cite{Lingala::2011}, \cite{lingala2011accelerated}, \cite{Otazo::2015}, \cite{zhang2015accelerating}. However, there is still a need for improving the quality of the MRI reconstruction whilst decreasing the number of measurements. 

%either sequentially or as

A commonality of previous techniques is that they perform  a single task (just reconstruction). However in most recent years, there has been a great interest for improving medical image reconstruction~\cite{Lingala::2015,Royuela-del-Val2016,aviles2018compressed} by using what is called multi-tasking models (also known as joint models). The central idea of this perspective is that by sharing representation between tasks and carefully intertwining them, one can create synergies across challenging problems and reduce error propagation, which results in boosting the accuracy of the outcomes whilst achieving better generalisation capabilities. 

Following the multi-task perspective, different works have been presented e.g.~\cite{Lingala::2015,Royuela-del-Val2016,odille2016joint,aviles2018compressed,blume2010joint,jacobson2003joint}. Unlike existing approaches from the literature, and to the best of our knowledge, we are presenting for the first time a model that considers more than two tasks (see Fig. \ref{fig:teaser}). In this work, we introduce a new variational multi-tasking framework that integrates, in a single model, three relevant tasks in MRI: reconstruction, registration and super-resolution. Whilst this is a relevant part of this work, our contributions are:

\begin{itemize}
    \item We propose a computationally tractable and mathematically well-motivated variational multi-task framework for motion correction in MRI, in which our novelties largely rely on:
        \begin{itemize}
            \item An original optimisation model that is composed of an $L^2$ fidelity term that allows sharing representations between three tasks (reconstruction, super-resolution and registration);  a weighted total variation (TV) ensuring robustness of our method to intensity changes; a TV regulariser of the highly resolved reconstruction; and a hyperelasticity-based regulariser. We demonstrate that this combination yields to significant improvements over sequential models and existing multi-task methods.  
            \item We show that our optimisation problem can be solved efficiently by using auxiliary variables and then splitting it into sub-problems. We show that this  requires lower CPU time than several methods from the body of literature.
        \end{itemize}
    \item  We extensively evaluate our approach using five datasets and different acceleration factors. We also compare our multi-task framework against existing approaches. Our experiments are further validated by interpretations of experts.
    %\item We open new opportunities for further developments on multi-task frameworks. 
\end{itemize}

%%--- contributions and overview

%In this work we follow the philosophy proposed in \cite{Corona}...

\section{Related Work}
There have been different attempts to improve motion correction in MRI from undersampled data. Besides motion prevention techniques such as breath-holding, another set of of algorithmic approaches has been devoted to correct for motion using image-based motion tracking, where one needs an explicit estimation of the motion in between scans. The predominant scheme, in this context, is image registration which aims at finding a mapping aligning a moving image to a reference one. Following this perspective, the body of literature can be roughly classified into rigid (translations, rotations) and deformable registration. 

In the first category, several approaches have been proposed including~\cite{gupta2003fast,adluru2006model,wong2008first,johansson2018rigid}. 
However, physiological motion such as cardiac and respiratory, can hardly be characterised by a simple combination of rotations and translations. To mitigate this limitation, motion correction methods based on deformable registration have been proposed such as~\cite{LedesmaCarbayoKellman2007,ledesma2007motion,li2015expiration,jansen2017evaluation}. However, in a closer look on the aforementioned approaches, a commonality between them is that the algorithmic approaches are performed sequentially. That is - the motion estimation task is executed only after the image reconstruction is computed (from now we refer to this perspective as sequential model). A clear drawback of using this perspective is that the motion estimation highly depends on the quality of the reconstruction as well as on the selection of the reference image.

More recently, a body of research has solved jointly multiple tasks (the so-called multi-task approach) such as image reconstruction and registration in a unified framework.
In particular, in the medical domain and following a variational perspective~\cite{burger2018variational}, different works have been reported using multi-task approaches. These include SPECT imaging~\cite{mair2006estimation,schumacher2009combined}, PET~\cite{blume2010joint} and  MRI~\cite{Lingala::2015,Royuela-del-Val2016,aviles2018compressed}- to name a few.  The works with a closer aim to ours are discussed next. 

%Lingala, S. G., DiBella, E., & Jacob, M. (2014). Deformation corrected compressed sensing (DC-CS): a novel framework for accelerated dynamic MRI. IEEE transactions on medical imaging, 34(1), 72-85.
%Royuela‐del‐Val -- To give emphasis as we compared against them. comment on that later

Authors in~\cite{jacobson2003joint} and \cite{blume2010joint} proposed a joint model composed of a motion-aware likelihood function and a smoothing term for a simultaneous image reconstruction and motion estimation for PET data. Schumacher et al.~\cite{schumacher2009combined} presented an algorithmic approach that combines reconstruction and motion correction for SPECT imaging. The authors proposed a variational approach that includes a regulariser penalising an offset of motion parameter - to favour a mean location of the target object. However, the major limitation is that they only consider rigid motions. In the same spirit, authors of ~\cite{fessler2009,fessler2010optimization}
proposed a generic joint reconstruction/registration framework. That model is based on a penalised-likelihood functional, which uses a weighted least square fidelity term along with a spatial and a motion regulariser.

Odille et al.~\cite{odille2016joint} proposed a joint model for MRI image reconstruction and motion estimation. This approach allows for an estimate of both intra and inter-image motion, meaning that, not only the misalignment problem is addressed but also it allows correcting for blurring/ghosting artefacts. More recently in the context of deep-learning (DL), a number of methods has been investigated for image registration - e.g.~\cite{yang2017quicksilver,de2017end}. Although, certainly, those approaches deserve attention, their review goes beyond the scope of this paper.

\section{Proposed method}
%In this section we first \textcolor{black}{recall} the mathematical formulations to perform MRI reconstruction \textcolor{black}{in a compressed sensing (CS) framework}. Secondly, we \textcolor{black}{introduce the super-resolution task in a multi-frame variational setting.} Thirdly, we present \textcolor{black}{our unified approach addressing simultaneously MRI reconstruction from highly undersampled measurements, super-resolution from multiple low resolution images corrupted by motion and motion correction viewed as a registration task to retrieve a single high quality image along with the physiological dynamics via deformation maps. Finally, we describe our optimisation scheme.}%a joint framework to solve these tasks simultaneously for the motion correction problem and finally, we describe its optimisation scheme.
In this section, we introduce our joint variational framework which addresses simultaneously the following three tasks: MRI reconstruction, registration and super-resolution. We introduce the mathematical formulation as separated tasks and then we show how our novel optimisation model judiciously intertwines them. Finally, we describe the numerical realisation of our approach. 

%\textcolor{red}{problem statment}
\smallskip
\textbf{Problem statement.} We remark to the reader the focus of this work. Given a set of multiple undersampled MR acquisitions $\{x_t\}_{t=1}^T$ of low resolution and corrupted by motion, we seek to recover a single high resolved, static and motion-corrected image that represents the true underlying anatomy along with the estimation of the breathing dynamics through deformation maps.

%More precisely, given a set of multiple undersampled MR acquisitions $\{x_t\}_{t=1}^T$ of low resolution and corrupted by motion, we aim at recovering a single higher resolved, static and motion-corrected image $u$, representing the true underlying anatomy. Additionally and \textit{jointly}, we seek to estimate the breathing dynamics through the deformation maps $\{\phi_t\}_{t=1}^T$.

\subsection{Task 1: CS MRI reconstruction}
%The first task in our multi-task framework is MRI reconstruction. 
%The main purpose of our work is to improve the MRI reconstruction taking advantage of temporal redundancy via a multi-task approach and we therefore recall the classical CS MRI reconstruction procedure.
In particular, in standard dynamic MRI, the acquired data is in a time-spatial-frequency space, i.e. \textit{k,t}-space, which is composed of  $x=(x_{m,t})_{m=1,t=1}^{M,T}\in \mathbb{C}^{M\times T}$ measurements. Therefore, the task of MRI reconstruction from those samples, reads:
\begin{equation}
    x=\mathcal{F}u+\eta,
    \label{eq:forward}
\end{equation}
where $\mathcal{F}: \mathbb{R}^{N\times T}\to \mathbb{C}^{M\times T}$ is the \textcolor{black}{undersampled} MRI forward operator. More precisely, $\mathcal{F}=\mathcal{S}\mathcal{A}$ where $\mathcal{S}$ is a subsampling operator, $\mathcal{A}$ the Fourier operator, and $\mathcal{F}^*:\mathbb{C}^{M\times T}\to \mathbb{R}^{N\times T}$ its adjoint. Moreover, $u\in \mathbb{R}^{N\times T}$ is the stack of reconstructed images, $\eta$ an additive Gaussian noise inherent to the acquisition, and $t$ the temporal coordinate.

The MRI reconstruction task is thus highly ill-posed due to the noise and incomplete measurements. However, \eqref{eq:forward} can be solved by adding prior information and then casting the problem as a CS-based optimisation problem: 
\begin{equation}
    u^*\in \arg \min_u \| \mathcal{F}u-x\|_2^2 + \delta \|\Phi(u)\|_1,
    \label{eq:CSModel}
\end{equation}

\noindent
where the first term, i.e. data fidelity term, ensures consistency with the observed data $x$ whilst $\|.\|_1$ enforces sparsity in the transformed domain given by $\Phi$, and $\delta$ is a parameter balancing the influence of each term.

In this work, we focus on the Total Variation (TV)~\cite{ROF} regulariser, which, imposing edge sparsity, leads to piecewise constant reconstructions. It has shown great potential since early developments in  MRI reconstruction~\cite{lustig2007sparse}. However one can easily replace this regulariser by any other one in a plug-and-play fashion. 

Although,  a large body of literature has shown potential results in the context of undersampled MRI reconstruction using CS or its extended philosophies including~\cite{Lingala::2011,Majumdar::2012,Otazo::2015}, there is a still room for improvement, and in particular for the problem of reconstructing a single high quality image that reflects the true underlying anatomy. This motivates the use of two more tasks $-$ image registration and super-resolution, which are described next.

\subsection{Task 2\&3: When Image Registration Meets Image Super-Resolution}
In a dynamic MRI setting, there are two tasks that show a natural strong correlation: motion estimation and super-resolution. Therefore, our hypothesis is that by unifying these two tasks, one can create synergies leading to error propagation reduction, and therefore, an increase of the image quality.

%Our motivation comes from the following. In terms of motion correction and estimation in MRI, image registration~\cite{Lingala::2015} is a perfect fit for finding an optimal transformation that models the undergoing motion between two consecutive images - the reference and moving ones. Whilst a desirable clinical outcome is to seek a higher resolution image that allows enhancing fine details - i.e. the super-resolution task. 

In a multi-frame variational framework, super-resolution is the problem of restoring a high-resolution image from several low quality images that are corrupted by motion. From a variational perspective, it can be expressed as:
\begin{equation}
    \min_u \sum_{i=1}^m\|\operatorname{DBW_i}u-f_i\|_2^2+\lambda \operatorname{Reg}(u),
    \label{eq:SRRmodel}
\end{equation}

\noindent
where $\operatorname{D}$ and $\operatorname{B}$ are the downsampling and blurring operators correspondingly. Moreover, $\operatorname{W_i}$ models the geometric warp existing between the observed images $f_i$ and the restored image $u$ to correct for motion. Finally, $\operatorname{Reg}(u)$ is a generic regulariser.
In this work, the dowsampling operator is modelled as an averaging window, the blurring kernel is assumed to be Gaussian, and the warping operator is viewed as the deformations from a registration task. Whilst for the regulariser we adopt the TV option, our approach is well-suited for the plug-and-play setting. That is- one can easily replace the TV regulariser with other options.

In particular, for our registration method we have the following. 
Let $\Omega$ be the image domain, i.e. a connected bounded open subset of $\mathbb{R}^2$, and $u:\Omega \rightarrow \mathbb{R}$ be the sought single reconstructed image depicting the true underlying anatomy. We introduce the unknown deformations, between the $t$-th acquisition and the image $u$, as $\phi_t : \bar{\Omega} \rightarrow \mathbb{R}^2$. We remark that the deformations are smooth mappings with topology preserving and injectivity properties. Moreover, let $v_t$ be the associated displacements such that $\phi_t=\mbox{Id}+v_t$, where $\mbox{Id}$ is the identity function. At the practical level, these deformations should be with values in $\bar{\Omega}$, and Ball's results~\cite{ball_1981} guarantee this property theoretically for our model. We also consider $\nabla \phi_t : \Omega \rightarrow M_2(\mathbb{R})$ to be the gradient of the deformation, where $M_2(\mathbb{R})$ is the set of real square matrices of order two.

As MRI images biological soft tissues well-modelled by hyperelastic materials, which allows for large  and smooth deformations while keeping an elastic behavior, we propose to view the shapes to be matched in the registration process as isotropic, homogeneous and hyperelastic materials of Ogden type. This is reflected in our formulation as a regularisation on the deformations $\phi_t$ based on the stored energy function of such a material.

In two dimensions, the stored energy function of an Ogden material, in its general form, is given by the following expression: $W_O(F)=\underset{i=1}{\overset{M}{\sum}}a_i\|F\|_F^{\gamma_i}+\Gamma(\mbox{det}F)$, with $a_i>0$, $\gamma_i\geq 1$ for all $i=1,\cdots,M$ and $\Gamma\, :\, ]0;\infty[ \rightarrow \mathbb{R}$ a convex function satisfying $\underset{\delta \rightarrow 0^+}{\lim} \Gamma(\delta)=\underset{\delta \rightarrow +\infty}{\lim}\Gamma(\delta)=+\infty$, $\|.\|_F$ designating the Frobenius matrix norm. 

\textcolor{black}{F}ollowing \cite{Corona}, we consider the \textcolor{black}{particular} energy: 
\begin{equation}W_{Op}(F)=\left\{ \begin{array}{c}a_1\|F\|_F^4 + a_2\left( \mbox{det}F - \frac{1}{\mbox{det}F}\right)^4 \text{ if } \mbox{det}F>0,\\+\infty\text{ otherwise},\end{array} \right.
    \label{eq:Wop}
\end{equation}

\noindent
with $a_1>0$, and $a_2>0$. Both changes in length and area are penalised and topology preservation is ensured with this formulation. %The first term penalises the changes in length, whilst the second term enforces small changes in area. 
%We refer to \cite{Corona} to assess that the above function falls indeed within the general formulation of the stored energy function of an Ogden material and the theoretical motivations for this specific choice. 

\subsection{Variational Multi-Task Model: Reconstruction, Registration and Super-Resolution}
In the body of literature, there have been different attempts of using reconstruction, registration and super-resolution. However, they tackled the tasks separately or jointly but up to two tasks. In this part, we describe, for the first time, how these three tasks can be jointly computed to benefit the final reconstruction. The main idea is to exploit temporal redundancy in the data to compensate for motion artefacts due to breathing and/or involuntary movements whilst increasing the resolution to retrieve finer details in the reconstruction. In particular, we now turn to describe how \eqref{eq:CSModel} and \eqref{eq:SRRmodel} can be solved in a multi-task framework.

Our variational multi-task framework takes three key factors: firstly the hyperelastic regulariser \eqref{eq:Wop}, secondly  a discrepancy measure that joins the reconstruction, super-resolution and the registration tasks, and the TV-based regularisers for reconstruction and super-resolution. Moreover, our model  accounts for intensity changes, this, by modifying the CS-classical TV regulariser for the weighted TV to enforce edge alignment (see Definition in Section V of the Supplementary Material).
\textcolor{black}{From now $\mathcal{F}$ is acting on one single frame.}
We thus introduce weights $g_t$ as the Canny edge detector applied to $G_\sigma * \mathcal{F}^*x_t$ - for each $t=1,\cdots,T$ - where $G_\sigma$ is a Gaussian filter of variance $\sigma$. 
%We recall the definition of the weighted BV-space and the associated weighted TV. 

%\begin{definition}[\hspace{-0.1cm}{\cite[Definition 2]{debroux-bib:baldi}}]
%Let $w$ be a weight function satisfying some properties (defined in \cite{debroux-bib:baldi} and fulfilled by $g_t$). We denote by $BV_w(\Omega)$ the set of functions $f\in L^1(\Omega,w)$, which are integrable with respect to the measure $w(x)dx$, such that:
%$
% \sup\left\{ & \int_\Omega f \mbox{div}(\varphi)\,dx\,:\, |\varphi|\leq w \text{ everywhere},\, \\
% & \varphi\in Lip_0(\Omega,\mathbb{R}^2) \right\}$ $<+\infty,
%$
%\noindent
%where $Lip_0(\Omega,\mathbb{R}^2)$ is the space of Lipschitz continuous functions with compact support. We denote by $TV_w$ the previous quantity.
%\end{definition}
We thus consider the following fidelity term and regulariser for our high-resolved image:
\noindent
\begin{equation}
\begin{aligned}
&E(u,(\phi_t)_{t=1,\cdots,T})= \frac{1}{T} \underset{t=1}{\overset{T}{\sum}}\delta TV_{g_t}(\textcolor{black}{(\mathcal{C}u)} \circ \phi_t^{-1})\\
& +\alpha \operatorname{TV}(u)+\frac{1}{2}\|\mathcal{F}((\mathcal{C}u\circ \phi_t^{-1}))-x_t\|_2^2, 
\end{aligned}
\label{eq:ftreg}
\end{equation}
where $\mathcal{C}=DB$ comes from the super-resolution formulation. The first term of $F$ seeks to align the edges of the deformed  reconstruction $((\mathcal{C}u)\circ \phi_t^{-1})$) with the ones of the different acquisitions, whilst regularising \textcolor{black}{it}. The second quantity aims to get $\mathcal{F}((\mathcal{C}u)\circ \phi_t^{-1})$ close to the acquisitions $x_t$, and thus $\mathcal{F}^*(x_t)$ close to $\mathcal{C}u\circ \phi_t^{-1}$ to correct for motion.  

Our variational multi-task framework is then defined as a combination of \eqref{eq:Wop} and \eqref{eq:ftreg}, which leads to the following minimisation problem:

\begin{equation}
\begin{aligned}
 &\inf_{u,(\phi_t)_{t=1,\cdots,T}} G(u,(\phi_t)_{t=1,\cdots,T}) = E(u,(\phi_t)_{t=1,\cdots,T})\\
 &+\frac{1}{T}\underset{t=1}{\overset{T}{\sum}}\int_\Omega W_{Op}(\nabla \phi_t)\,dx, \\%\tag{P} 
\Leftrightarrow
 &\inf_{u,(\phi_t)_{t=1,\cdots,T}} \frac{1}{T} \underset{t=1}{\overset{T}{\sum}}\frac{1}{2}\|\mathcal{F}( (\mathcal{C}u)\circ \phi_t^{-1})-x_t\|_2^2 + \alpha \operatorname{TV}(u) \\
 &+ \delta \operatorname{TV}_{g_t}( (\mathcal{C}u) \circ \phi_t^{-1})+\int_\Omega W_{Op}(\nabla \phi_t)\,dx, 
 \label{initial_problem}
 \end{aligned}
 \end{equation}

\noindent
We now introduce the next theorem to set the well-posedness of our model.

\begin{theorem}[Existence of minimisers]
 Let $\mathcal{F}=\mathcal{S}\mathcal{A} : L^2(\mathbb{R}^2) \rightarrow L^2(\mathbb{R}^2)$, $\mathcal{C} : L^1(\Omega')\rightarrow L^p(\Omega)$, be linear bounded and continuous for the strong topology operators with $p\in]1,\frac{8}{5}[$, and $\Omega \subset \Omega'$, $\Omega$ and $\Omega'$ connected bounded open subsets of $\mathbb{R}^2$ with boundaries of class $\mathcal{C}^1$ (verified by the chosen operators). With $\delta,\,\alpha,\,a_1,\,a_2>0$, problem (\ref{initial_problem}) admits minimisers $(\bar{u},(\bar{\phi}_t)_{t=1,\cdots,T})$ on $\mathcal{U}=\{u\in BV(\Omega'),\, \phi_t\in \mathcal{W},\, \forall t=1,\cdots,T\,|\, (\mathcal{C}u)\circ\phi_t^{-1}\in BV_{g_t,0}(\Omega),\, \forall t\in\{1,\cdots,T\}\}$, with $\mathcal{W}=\{\psi \in \mbox{Id}+W^{1,4}_0(\Omega,\mathbb{R}^2)\,|\, \mbox{det}\nabla \psi\in L^4(\Omega),\, \frac{1}{\mbox{det}\nabla \psi}\in L^4(\Omega),\, \mbox{det}\nabla \psi >0 \text{ a.e. on } \Omega\}$.
\end{theorem}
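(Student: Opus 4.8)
The plan is to run the direct method of the calculus of variations, working simultaneously in the intensity $u$ and in the deformations $(\phi_t)_{t=1,\ldots,T}$, and treating the composed fields $w_t:=(\mathcal{C}u)\circ\phi_t^{-1}$ as the natural auxiliary unknowns --- which is exactly how the admissible set $\mathcal U$ is parametrised. Since every term of $G$ in \eqref{initial_problem} is nonnegative, $m:=\inf_{\mathcal U}G\in[0,\infty)$, and I would fix a minimising sequence $(u^n,(\phi_t^n)_t)\subset\mathcal U$ with $G(u^n,(\phi_t^n)_t)\to m$. From $\int_\Omega W_{Op}(\nabla\phi_t^n)\,dx\le C$ one reads off $L^4(\Omega)$ bounds on $\nabla\phi_t^n$, $\det\nabla\phi_t^n$ and $1/\det\nabla\phi_t^n$; since $\phi_t^n\in\mathrm{Id}+W^{1,4}_0(\Omega,\mathbb R^2)$, Poincar\'e's inequality makes $(\phi_t^n)_n$ bounded in $W^{1,4}(\Omega,\mathbb R^2)$, hence (as $W^{1,4}(\Omega,\mathbb R^2)$ embeds compactly into $C^0(\bar\Omega)$ in dimension two) weakly convergent in $W^{1,4}$ and uniformly on $\bar\Omega$ to some $\bar\phi_t\in\mathrm{Id}+W^{1,4}_0$. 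Using that the weight $g_t$ is bounded below by a positive constant, the $\mathrm{TV}_{g_t}$ term controls $\mathrm{TV}(w_t^n)$, which together with the zero trace built into $BV_{g_t,0}(\Omega)$ and Poincar\'e gives boundedness of $(w_t^n)_n$ in $BV(\Omega)\hookrightarrow L^2(\Omega)$; thus $w_t^n\to\bar w_t$ in $L^1(\Omega)$, in every $L^q(\Omega)$ with $q<2$ by interpolation, and weakly-$*$ in $BV$.

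\textbf{Coercivity in $u$.} This step is less automatic, since neither $\mathcal F$ nor $\mathcal C$ is bounded below. I would first change variables through the homeomorphism $\phi_t^n$ --- legitimate by Ball's global invertibility theorem \cite{ball_1981}, using $\det\nabla\phi_t^n>0$ a.e. and $1/\det\nabla\phi_t^n\in L^4$ --- and estimate $\|\mathcal C u^n\|_{L^1(\Omega)}$ by a H\"older pairing of $\|w_t^n\|_{L^q}$ against a power of $\|(\det\nabla\phi_t^n)^{-1}\|_{L^4}$, all of which are bounded. Together with the bound on $\mathrm{TV}(u^n)$ from the $\alpha\,\mathrm{TV}(u)$ term, a standard compactness--contradiction argument on $\Omega'$ --- which works because $\mathcal C$ does not annihilate constants --- upgrades this to boundedness of $(u^n)_n$ in $BV(\Omega')$, so $u^n\to\bar u$ in $L^1(\Omega')$ and weakly-$*$ in $BV(\Omega')$.

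\textbf{Feasibility of the limit.} A Fatou-type argument using the uniform $L^4$ bound on $1/\det\nabla\phi_t^n$ and weak continuity of the Jacobian ($\det\nabla\phi_t^n\rightharpoonup\det\nabla\bar\phi_t$, since $4>2$) gives $\det\nabla\bar\phi_t>0$ a.e. and $1/\det\nabla\bar\phi_t,\det\nabla\bar\phi_t\in L^4(\Omega)$, so $\bar\phi_t\in\mathcal W$ and, again by \cite{ball_1981}, $\bar\phi_t$ is a homeomorphism of $\bar\Omega$ with Sobolev inverse. The crucial identification is $\mathcal C\bar u=\bar w_t\circ\bar\phi_t$: I would prove $w_t^n\circ\phi_t^n\to\bar w_t\circ\bar\phi_t$ in $L^1(\Omega)$ by splitting off $(w_t^n-\bar w_t)\circ\phi_t^n$ (change of variables, the $L^q$--H\"older pairing above, and $w_t^n\to\bar w_t$ in $L^q$) and $\bar w_t\circ\phi_t^n-\bar w_t\circ\bar\phi_t$ (approximate $\bar w_t$ in $L^q$ by continuous functions, then invoke the uniform convergence $\phi_t^n\to\bar\phi_t$). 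Since $w_t^n\circ\phi_t^n=\mathcal C u^n$ and $\mathcal C u^n\to\mathcal C\bar u$ in $L^p(\Omega)\subset L^1(\Omega)$ by strong continuity of $\mathcal C:L^1(\Omega')\to L^p(\Omega)$, uniqueness of limits gives $\mathcal C\bar u=\bar w_t\circ\bar\phi_t$, hence $(\mathcal C\bar u)\circ\bar\phi_t^{-1}=\bar w_t\in BV_{g_t,0}(\Omega)$ for all $t$, i.e. $(\bar u,(\bar\phi_t)_t)\in\mathcal U$. This is where the hypothesis $p\in\,]1,\tfrac85[$ is used: it is exactly what makes $\|\bar w_t\circ\bar\phi_t\|_{L^p}$ finite (H\"older with $\bar w_t\in L^2(\Omega)$ and the inverse Jacobian in $L^{(2/p)'}$, $(2/p)'\le5$), so that the identity $\mathcal C\bar u=\bar w_t\circ\bar\phi_t$ is consistent with $\mathcal C\bar u\in L^p(\Omega)$.

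\textbf{Lower semicontinuity, and the main obstacle.} It then remains to pass to the $\liminf$ in $G$ along the subsequence: the data terms are weakly lsc ($w_t^n\rightharpoonup\bar w_t$ in $L^2$, $\mathcal F$ bounded linear hence weakly continuous, $v\mapsto\tfrac12\|\mathcal F v-x_t\|_2^2$ convex and continuous); $\alpha\,\mathrm{TV}(u)$ and each $\delta\,\mathrm{TV}_{g_t}(w_t)$ are lsc for $L^1$ convergence (weighted TV with a fixed positive lsc weight, via its dual formulation); and $\phi\mapsto\int_\Omega W_{Op}(\nabla\phi)\,dx$ is weakly lsc on $W^{1,4}$ since $W_{Op}$ is polyconvex --- it splits as $a_1\|F\|_F^4+h(\det F)$ with $F\mapsto\|F\|_F^4$ convex and $h(\delta)=a_2(\delta-\delta^{-1})^4$ convex on $]0,\infty[$ and $+\infty$ otherwise (a one-line second-derivative check), and $\det\nabla\phi_t^n\rightharpoonup\det\nabla\bar\phi_t$; see \cite{Corona}. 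Summing these inequalities yields $G(\bar u,(\bar\phi_t)_t)\le\liminf_n G(u^n,(\phi_t^n)_t)=m$, and feasibility of $(\bar u,(\bar\phi_t)_t)$ closes the proof. I expect the main obstacle to be the limit passage in the nonlinear composition term: one must guarantee that $\bar\phi_t$ is a genuine homeomorphism with a controlled Sobolev inverse (whence the role of $1/\det\nabla\phi_t\in L^4$ and of \cite{ball_1981}) and that $v\mapsto v\circ\phi$ is $L^1$-continuous along the sequence, which forces trading the $BV$ regularity of the $w_t^n$ against the integrability of the inverse Jacobians and is precisely what pins down the admissible range of $p$; the coercivity argument for $u^n$ is the second non-routine ingredient.
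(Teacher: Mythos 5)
Your proposal follows essentially the same route as the paper's proof: the direct method with the composed fields $(\mathcal{C}u)\circ\phi_t^{-1}$ as auxiliary unknowns, coercivity from the Ogden energy and the (weighted) TV terms, Ball's invertibility results to handle the inverse deformations, the change-of-variables/H\"older/smooth-approximation splitting to identify $\mathcal{C}\bar u=\bar w_t\circ\bar\phi_t$ (with the same exponent bookkeeping pinning down $p<\tfrac85$), the $\mathcal{C}\mathbb{1}\neq 0$ argument to recover a $BV(\Omega')$ bound on $u_n$, and polyconvexity plus Baldi's semicontinuity theorem for the lower-semicontinuity step. The only differences are cosmetic (uniform convergence of $\phi_t^n$ versus strong $L^p$ convergence with Lipschitz approximants, and a compactness--contradiction phrasing of the mean-value estimate that the paper carries out explicitly), so the proposal is correct and matches the paper's argument.
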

\begin{proof}
The proof can be found in Section V of the supplementary material.
\end{proof}

In the next section, we detail how the proposed model \eqref{initial_problem} can be solved in a computational tractable form.

\subsection{Optimisation Scheme}
The numerical realisation of~\eqref{initial_problem} imposes different challenges due to the nonlinearity and nonconvexity in $\nabla \phi_t$ and the composition $(\mathcal{C}u)\circ \phi_t^{-1}$ in the fidelity term. In this work, we overcome these difficulties by introducing three auxiliary variables $z_t,\, h_t,f_t$, this, to mimic $\nabla \phi_t$, $(\mathcal{C}u)\circ \phi_t^{-1}$ and $h_t$. We then relax our problem using quadratic penalty terms. This leads to the following discretised decoupled problem:

\begin{equation}
\begin{aligned}
    \min_{u,\phi_t,z_t,h_t,f_t}\,& \frac{1}{T}\sum_{t=1}^T \underset{x \in \Omega}{\sum} W_{Op}(z_t(x)) + \frac{\gamma_1}{2} \|z_t - \nabla \phi_t\|_2^2 \\
    &+ \frac{\gamma_3}{2} \|\mathcal{F}h_t -x_t \|_2^2 + \alpha \operatorname{TV}(u)\\
    &+ \frac{\gamma_2}{2} \|(h_t - (\mathcal{C}u) \circ \phi_t^{-1}) \sqrt{\operatorname{det} \nabla (\phi_t)^{-1}}\|_2^2 \\
    &+ \frac{1}{2\theta}\|f_t -h_t\|_2^2 +\operatorname{TV}_{g_t}(f_t).
    \end{aligned}
    \label{eq:jointModel}
\end{equation}

We now can solve our minimisation problem by splitting \eqref{eq:jointModel} into five more computational tractable sub-problems.
We now turn to give more details on each sub-problem.

\tikz\draw[red,fill=red] (0,0) circle (.5ex); {\textsc{Sub-problem 1: Optimisation over $z_t$.}}  In practice, $z_t=(z_{t,1},z_{t,2})^T$ simulates the gradient of the displacements $v_t=(v_{t,1},v_{t,2})^T$ associated to the deformations $\phi_t$. For every $z_t$, we have $z_t=\begin{pmatrix} z_{11} ~ z_{12} \\ z_{21} ~ z_{22}\end{pmatrix}$. For the sake of readability, we drop here the dependency on $t$. We solve the Euler-Lagrange equation with an $L^2$ gradient flow and a semi-implicit finite difference scheme and update $z_t$ as:% 

\begin{equation*}
\begin{aligned}
z_{11}^{k+1}&=\frac{1}{1+dt \gamma_1} \Bigg( z_{11}^{k} + dt(-4a_1 \|I+z_t^{k}\|^2_F(z_{11}^{k}+1) \\
&- 4 a_2 (1+z_{22}^{k})  c_0 c_1
+ \gamma_1 \frac{ \partial v_{t,1}^{k}}{\partial x} \Bigg), 
   \end{aligned}
\end{equation*}
\begin{equation*}
\begin{aligned}
z_{12}^{k+1}&=\frac{1}{1+dt \gamma_1} \Bigg( z_{12}^{k} + dt(-4a_1 \|I+z_t^{k}\|^2_F z_{12}^{k} \\& + 4  a_2 z_{21}^{k} c_0 c_1
+ \gamma_1 \frac{ \partial v_{t,1}^{k}}{\partial y} \Bigg), 
   \end{aligned}
\end{equation*}
\begin{equation*}
\begin{aligned}
z_{21}^{k+1}&=\frac{1}{1+dt \gamma_1} \Bigg( z_{21}^{k} + dt(-4a_1 \|z_t^{k}+I\|^2_F z_{21}^{k} \\&+ 4  a_2 z_{12}^{k} c_0 c_1
+ \gamma \frac{ \partial v_{t,2}^{k}}{\partial x} \Bigg), 
   \end{aligned}
\end{equation*}
\begin{equation*}
\begin{aligned}
z_{22}^{k+1}&=\frac{1}{1+dt \gamma_1} \Bigg( z_{22}^{k} + dt(-4a_1 \|I+z_t^{k}\|^2_F(z_{22}^{k}+1)\\& - 4 a_2 (1+z_{11}^{k})  c_0 c_1+ \gamma_1 \frac{ \partial v_{t,2}^{k}}{\partial y} \Bigg), 
   \end{aligned}
\end{equation*}

\noindent
with $c_0=\bigg(\operatorname{det}(I+z_t^{k})- \frac{1}{\operatorname{det}(I+z_t^{k})}\bigg)^3$ and $c_1=1+ \frac{1}{(\operatorname{det}(I+z_t^{k}))^2}$.

\medskip
\tikz\draw[red,fill=red] (0,0) circle (.5ex); \textsc{Sub-problem 2: Optimisation over $\phi_t$.} We solve the Euler-Lagrange equation in $\phi_t$, after making the change of variable $y=\phi_t^{-1}(x)$ in the $L^2$ penalty term, for all $t$, using an $L^2$ gradient
flow scheme with a semi-implicit Euler time stepping. %(for N iterations in the same inner loop as $z_t$):
\begin{equation*}
\begin{aligned}
    0&=-\gamma_1 \Delta \phi_t^{k+1} + \gamma_1 \begin{pmatrix}\operatorname{div}z_{t,1}^{k+1} \\ \operatorname{div}z_{t,2}^{k+1} \end{pmatrix}\\ & + \gamma_2 (h_t^k \circ \phi_t^{k} - \mathcal{C}u^k) \nabla h_t^k(\phi_t^{k}),\\
    \end{aligned}
\end{equation*}

\smallskip
\tikz\draw[red,fill=red] (0,0) circle (.5ex); \textsc{Sub-problem 3: Optimisation over $h_t$.} The update in $h_t$, for all $t$, has a closed form solution using the subsampling operator $\mathcal{S}$ and the Fourier operator $\mathcal{A}$ along with their  adjoints $\mathcal{S}^*$ and $\mathcal{A}^*=\mathcal{A}^{-1}$:% reads:
\begin{equation*}
\begin{aligned}
&h_t^{k+1} = \mathcal{A}^{*} \Bigg\{ (\gamma_2\operatorname{det}\nabla(\phi_t^{-1})^{k+1}\mbox{Id}+\gamma_3 \mathcal{S}^*\mathcal{S}+\frac{1}{\theta}\mbox{Id})^{-1}\\
&\bigg(\mathcal{A}\big(\gamma_2 \operatorname{det}\nabla (\phi_t^{-1})^{k+1} (\mathcal{C}u^{k}) \circ (\phi_t^{-1})^{k+1} + \frac{f_t^k}{\theta} \big)  + \gamma_3 \mathcal{S}^* x_t\bigg)
\Bigg\}.
\end{aligned}
\end{equation*}

\smallskip
\tikz\draw[red,fill=red] (0,0) circle (.5ex); \textsc{Sub-problem 4: Optimisation over $f_t$.} This is solved via Chambolle projection algorithm \cite{Chambolle2004}. For an inner loop over $n=1,\cdots,M$:
\begin{equation*}
    \begin{aligned}
    f_t^{n+1}&=h_t^{k+1}-\theta \operatorname{div}p_t^n, \\
    p_t^{n+1}&=\frac{p_t^n+ \delta_t \nabla (\operatorname{div} p_t^n - h_t^{k+1}/\theta)}{1+\frac{\delta_t}{g_t} \| \nabla(\operatorname{div} p_t^n - h_t^{k+1}/\theta)\| },
    \end{aligned}
\end{equation*}
with $\|.\|$ the Euclidean norm. After enough iterations, we set $f_t^{k+1}=f_t^{n+1}$.

\smallskip
\tikz\draw[red,fill=red] (0,0) circle (.5ex); \textsc{Sub-problem 5: Optimisation over $u$.} Finally, using the same change of variables as in sub-problem 2, the problem in $u$ reads:
\begin{equation*}
    \min_u \frac{\gamma_2}{2T} \sum_i^T\|h_t \circ \phi_t- (\mathcal{C}u) \|_2^2+\alpha \operatorname{TV}(u),
\end{equation*}
and we solve it with a primal-dual algorithm: \cite{pd}:
\begin{equation*}
\begin{aligned}
   % u^{k+1}&=(\gamma_2 \mathcal{C}^*\mathcal{C} + 1)^{-1} (u^k + \tau \nabla \cdot y^{n+1} \\
%    &+ \gamma_2 \mathcal{C}^* \sum_{i=1}^T w_i \circ \varphi_i),\\
y^{k+1} &= \frac {y^k + \sigma \nabla u^k}{\max (1, \| y^k + \sigma \nabla u^k \| ) }, \\
   u^{k+1}&=\left(\frac{\gamma_2}{T} \mathcal{C}^*\mathcal{C} + \mbox{Id}\right)^{-1}\Big(u^k + \tau \nabla \cdot y^{k+1} \\
   &+ \frac{\gamma_2}{T} \mathcal{C}^* \sum_{t=1}^T h_t^{k+1} \circ \phi_t^{k+1} \Big).
   \end{aligned}
\end{equation*}
\newline

We remark that, in this work, we solve the registration problem in $z_t$ and $\phi_t$ in a multi-scale framework from coarser to finer grids and using a regridding technique \cite{christensen} (see  Supplementary Material Section II). %The latter ensures topology preservation even though theoretically the design of the regulariser guarantees positivity of the Jacobian determinant. 

%We give an explicit definition of our model's parameters in the.

\section{Experimental Results}
In this section, we present the experimental results performed to validate our proposed approach.

\subsection{Data Description}
\label{sec:data}
 We evaluate our framework on five publicly available datasets. 
 \begin{itemize}
      \item \textbf{Dataset 1,2 \& 3}\footnote{\url{https://zenodo.org/record/55345\#.XBOkvi2cbUZ}}. 
      These datasets are 2D T1-weighted data~\cite{BAUMGARTNER201783} acquired during free breathing of the entire thorax. It was acquired with a 3T Philips Achieva system with matrix size = $215 \times 173$, slice thickness=8mm, TR=3.1ms and TE=1.9ms. We remark that each dataset refers to three different patients.
     \item \textbf{Dataset 4 \& 5}\footnote{\url{http://www.vision.ee.ethz.ch/~organmot/chapter_download.shtml}}. The datasets are 4DMRI data acquired during free-breathing of the right liver lobe~\cite{Siebenthal2007}. It was acquired on a 1.5T Philips Achieva system, TR=3.1 ms, coils=4, slices=25,  matrix size = $195\times166$, over roughly one hour on 22 to 30 sagittal slices and a temporal resolution of $2.6-2.8$ Hz. 
     %We consider two different slices of the volume and refer to them as \textbf{D4} and \textbf{D5}, in the rest of the paper.
  \end{itemize}
 
\subsection{Evaluation Protocol} 
To validate our theory, we expensively evaluate our model as follows. 

\medskip
\textbf{Comparison against Sequential Models.} For the first part of our evaluation, we compared our variational multi-task approach against two well-known models,
\textit{rigid} (RIGID), and \textit{hyperelastic} (HYPER), for deformations.
To run this comparison, we  solve the CS reconstruction model with TV, and then register all the frames to a reference frame used as initialisation in our proposed approach. For this, we use the well-established FAIR toolbox~\cite{2009-FAIR}, where we select rigid and hyperelastic transformations. Finally, we perform the super-resolution task with TV.

\begin{figure*}[t!]
    \centering
    \includegraphics[width=1\textwidth]{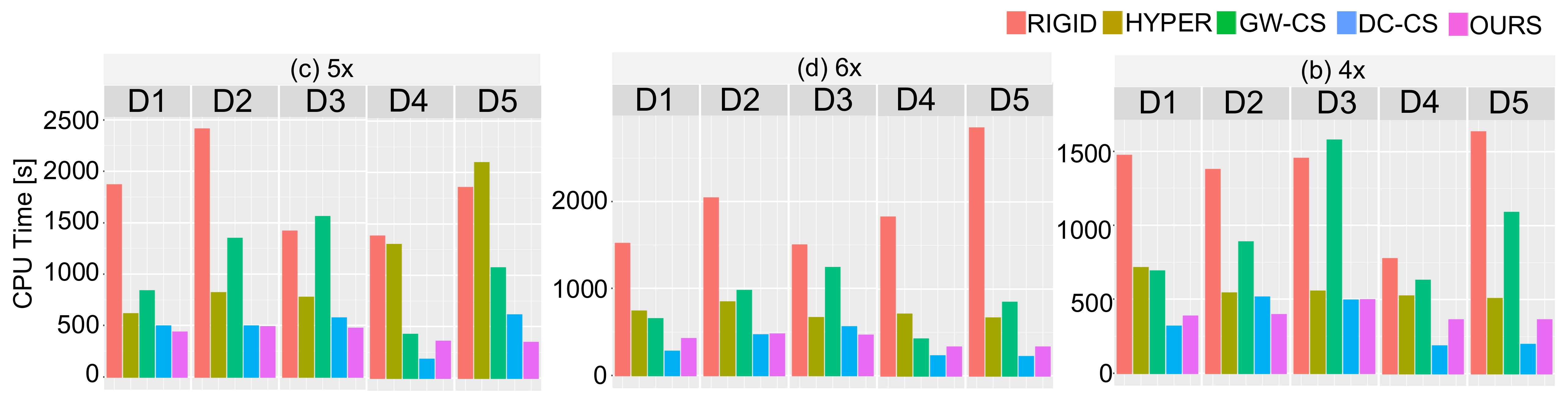}
    \caption{Computational performance comparison between sequential (three tasks), joint (two tasks) and our approach. Elapsed time in seconds. The sequential approaches are definitely much slower than our proposed method. We can see that our approach is comparable and competitive with joint approaches although slightly slower than the DC-CS, which, however, only computes two tasks. }
    \label{fig:cpu}
\end{figure*}

\smallskip
\textbf{Comparison against other Multi-task Approaches.} As to the best of our knowledge, this is the first variational model joining three tasks, we compare our model against two models that only joints two tasks- reconstruction and the motion estimation. More precisely, we compared our method against DC-CS~\cite{Lingala::2015} and  GW-CS~\cite{Royuela-del-Val2016}. To show robustness and generalisation capabilities of our approach, we ran the comparisons using fully sampled data and acceleration factors = $\{2,4,5,6,8\}$.

\begin{figure}[t!]
    \centering
    \includegraphics[width=0.45\textwidth]{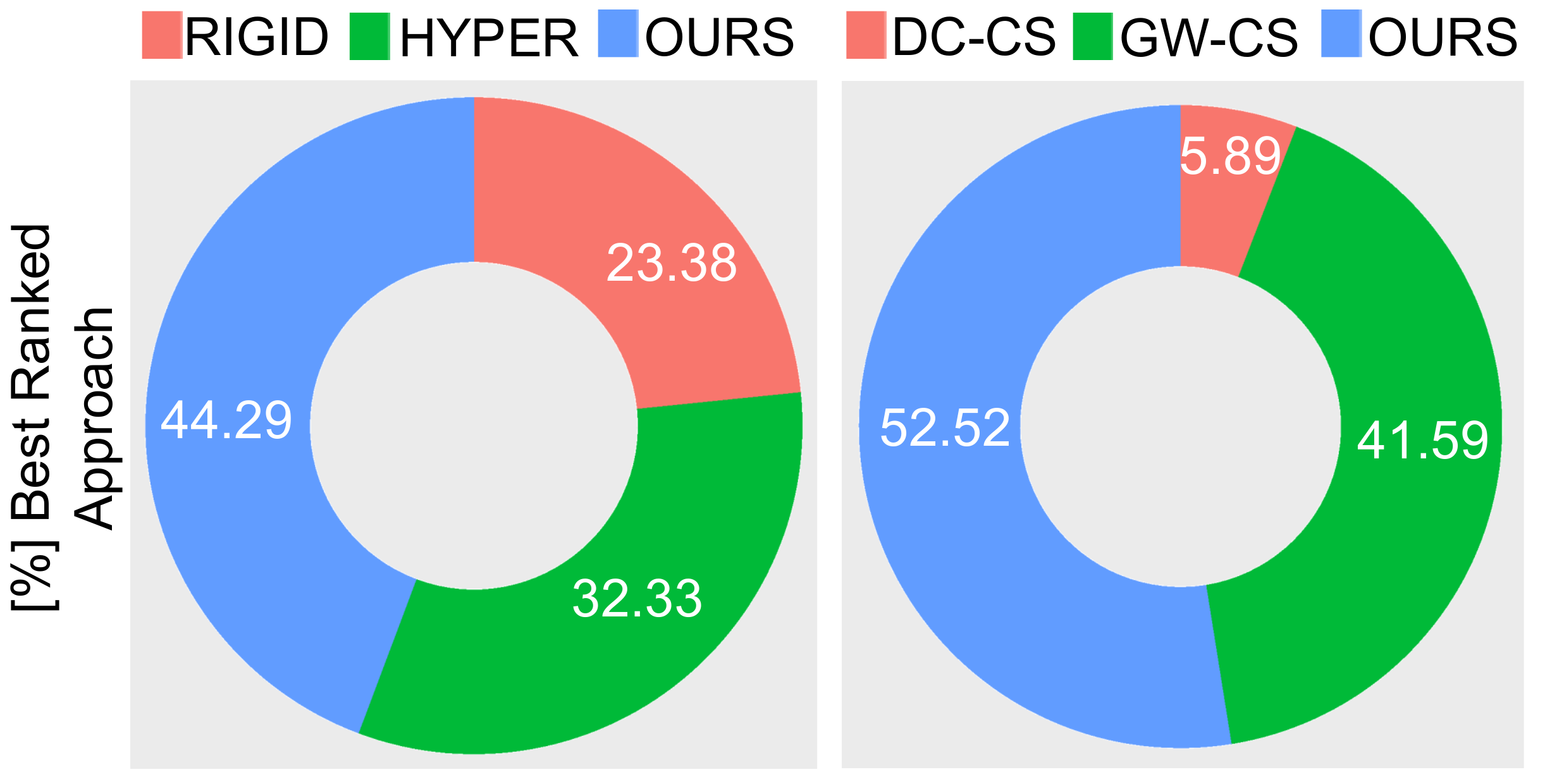}
    \caption{User study results (in \%) indicating the level of agreement of clinicians for sequential and multi-task comparisons. The majority indicates that our proposed reconstructions ranked the best.}
    \label{fig:userstudy}
\end{figure}

\smallskip
\textbf{Metrics Evaluation.} As we seek to recover a single high resolved and motion corrected image, there is not ground truth for this task. Therefore, our evaluation is based on a standard protocol for evaluating MRI reconstruction, that is a user-study (expert scoring). For this, we design a a three-point Likert rating scale in which experts were asked to indicate the level of agreement, ranging from \textit{best reconstruction to worst reconstruction}. The study is also supported by a nonparametric statistical test.
Detailed protocol can be found in Section III of the Supplementary Material. Moreover, to further support our multi-task model, we also offer CPU time comparison against all the compared approaches. 

\smallskip
The experiments reported in this section were run under the same conditions in a CPU-based Matlab implementation. We used an Intel core i7 with 4GHz and a 16GB RAM.

\subsection{Parameter Selection}
In our experiments, we set the parameter of our approach and the compared ones as described next. 
For our experiments, we set the parameters as displayed in Table~\ref{tab::parameter}.  Whilst for   the sequential approaches, based on the FAIR implementation, we set the hyperelastic regularisation parameter for Datasets 1,2 \& 3 $ =1$ and for Datasets 4 \& 5 $ =0.1$.

\subsection{Results and Discussion}
We evaluate our proposed approach following the scheme described in Section IV-B.

\medskip
\noindent
\textbf{$\triangleright$ Is our Multi-tasking Approach Better than a Sequential one?} We start evaluating our approach against two sequential models. We remark to the reader that sequential means to execute  tasks  (reconstruction  registration and super-resolution) one after another. In particular, we compared our approach against two well-known models for deformations:  rigid (RIGID) and hyperelastic (HYPER).  Results of this comparison are displayed in Figs. 4 and 5, and using different acceleration factors.

\begin{table}[t!]
\centering
%\resizebox{0.4\textwidth}{!}{  
\begin{tabular}{cccccc}
\cline{2-6}
\textsc{} & \cellcolor[HTML]{EFEFEF}\textsc{D1} & \cellcolor[HTML]{EFEFEF}\textsc{D2} & \cellcolor[HTML]{EFEFEF}\textsc{D3} & \cellcolor[HTML]{EFEFEF}\textsc{D4}  & \cellcolor[HTML]{EFEFEF}\textsc{D5}\\ \hline
\multicolumn{1}{c|}{\cellcolor[HTML]{EFEFEF} $a_1$} & 1 & 1 & 1 &  1 & 1\\ \hline
\multicolumn{1}{c|}{\cellcolor[HTML]{EFEFEF} $a_2$} & 50 & 50 & 50 & 100 & 100 \\ \hline
\multicolumn{1}{c|}{\cellcolor[HTML]{EFEFEF} $\gamma_1$  } & 5 & 5 & 5 & 1 &  1\\ \hline
\multicolumn{1}{c|}{\cellcolor[HTML]{EFEFEF} $\gamma_2$} & $10^5$ & $10^5$ & $10^5$ & $10^5$  &$10^5$ \\ \hline
\multicolumn{1}{c|}{\cellcolor[HTML]{EFEFEF} $\gamma_3$} & $15$ & $15$ & $15$ & $1$  &$1$ \\ \hline
\multicolumn{1}{c|}{\cellcolor[HTML]{EFEFEF} $\theta$} & $5$ & $5$ & $5$ & $5$  &$5$ \\ \hline
\multicolumn{1}{c|}{\cellcolor[HTML]{EFEFEF} $\sigma$} & $1.5$ & $1.5$ & $1.5$ & $2$  &$2$ \\ \hline
\multicolumn{1}{c|}{\cellcolor[HTML]{EFEFEF} $k$} & $2$ & $2$ & $2$ & $2$  &$2$ \\ \hline
\multicolumn{1}{c|}{\cellcolor[HTML]{EFEFEF} $N \& n$} & $500$ & $500$ & $500$ & $500$  &$500$ \\ \hline
\multicolumn{1}{c|}{\cellcolor[HTML]{EFEFEF} $\alpha$} & $0.01$ & $0.01$ & $0.01$ & $0.001$  &$0.001$ \\ \hline
\end{tabular}%}
\vspace{0.2cm}
\caption{Parameter values using for our model and for all datasets. In this table, "D" stands for Dataset.}
\label{tab::parameter}
\end{table}

In a closer look at those Figs., one can see that our reconstructions have better sharp edges and retrieve fine details, in the heart and below the lung areas,  than the sequential approaches.  Particularly, the rigid transformation is not able to compensate for the whole motion and thus blurring effects are visible, especially under the lungs, for all acceleration factors. Hyperelastic deformations however, have more degrees of freedom and are capable of better compensating for motion which is manifested as sharp edges in the HYPER reconstruction. Moreover, the darker structure, at the center bottom of the heart, disappears or is much less visible in the HYPER reconstructions than in our approach. This effect is observed for all acceleration factors.

\begin{figure}[t!]
    \centering
    %\hspace{-0.5cm}
    \includegraphics[width=0.5\textwidth]{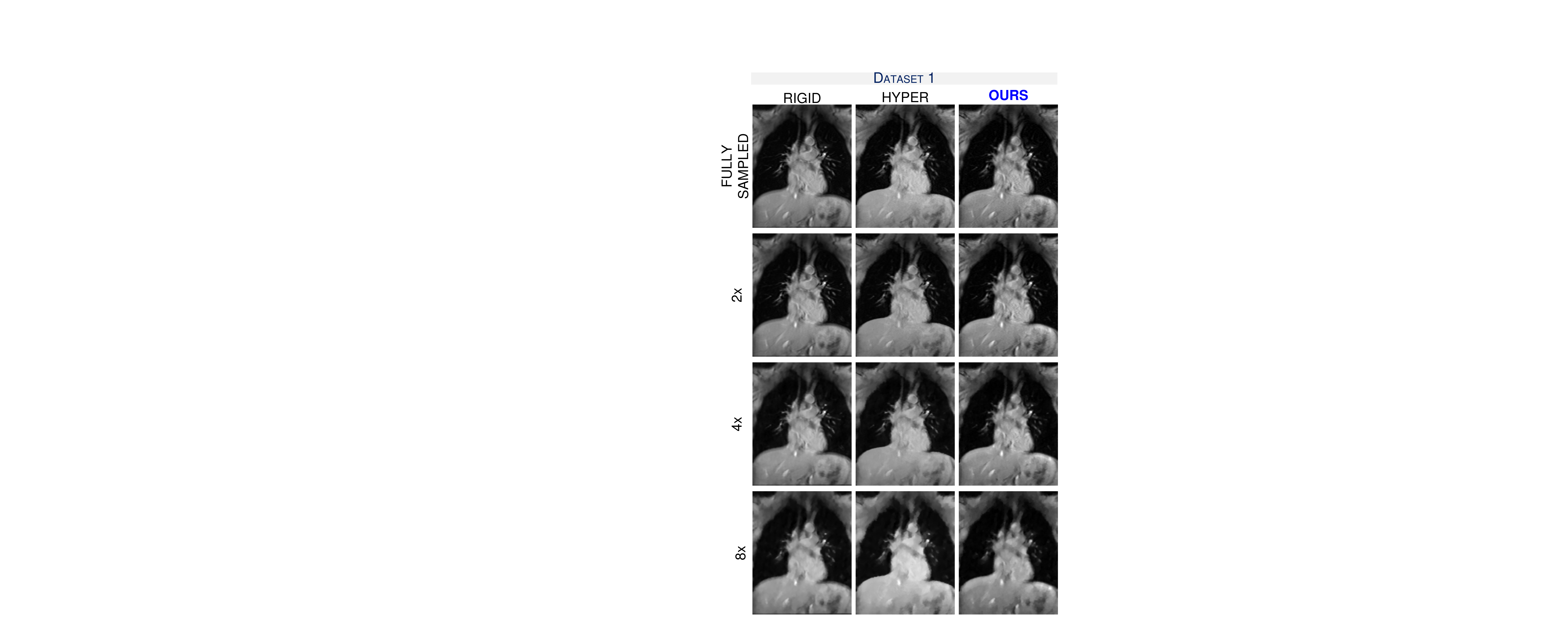}
    \caption{Reconstruction results for Dataset 1 compared to sequential approaches based on rigid and hyperelastic registration, for different acceleration factors. Our proposed reconstruction results in sharp edges and retrieves fine details especially for higher acceleration factors.}
   \label{fig:sD1B}
\end{figure}

\begin{figure}[t!]
    \centering
    %\hspace{-0.5cm}
    \includegraphics[width=0.5\textwidth,  height=14.4cm]{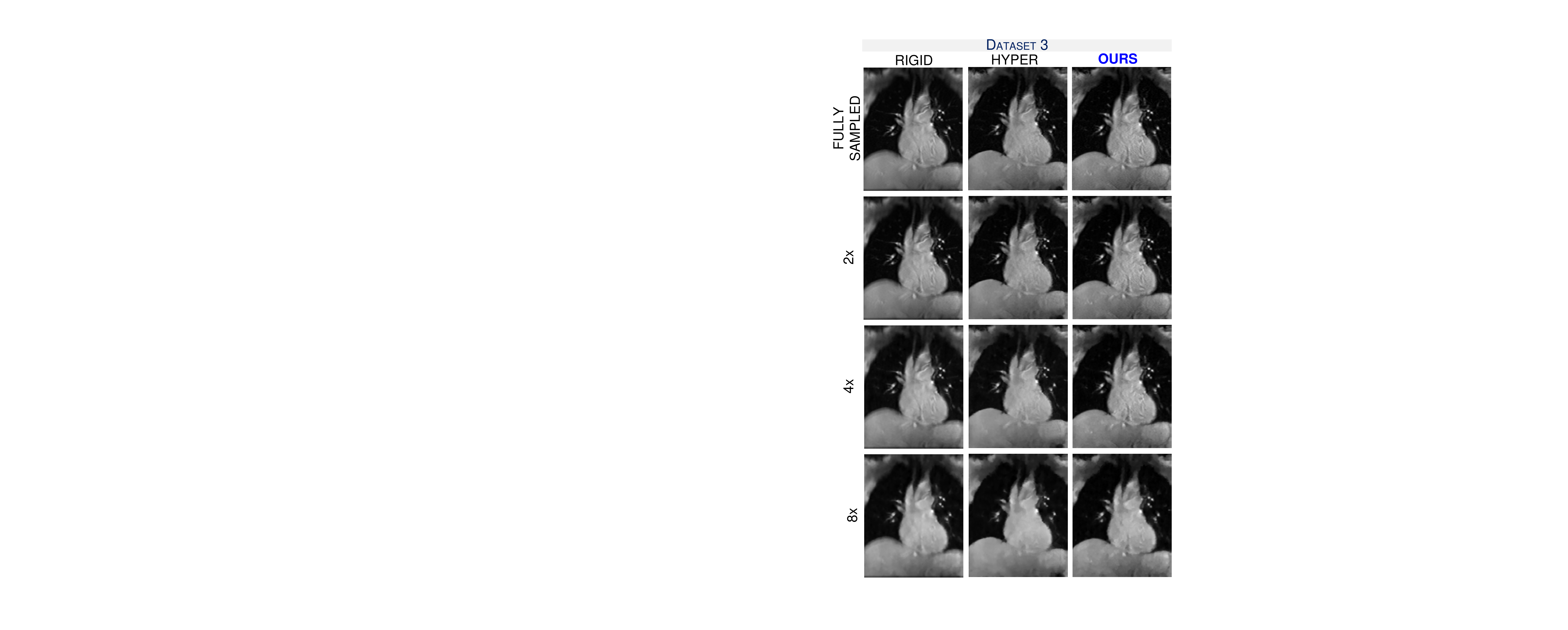}
    \caption{Reconstruction results for Dataset 3 compared to sequential approaches based on rigid and hyperelastic registration, for different acceleration factors. Our proposed reconstruction preserves fine structures and better correct for motion, thus resulting in sharper edges compared to the sequential reconstructions.}
   \label{fig:sD3B}
\end{figure}

%\begin{figure}[t!]
%    \centering
%    %\hspace{-0.5cm}
%    \includegraphics[width=0.5\textwidth, height=0.78\textwidth]{images/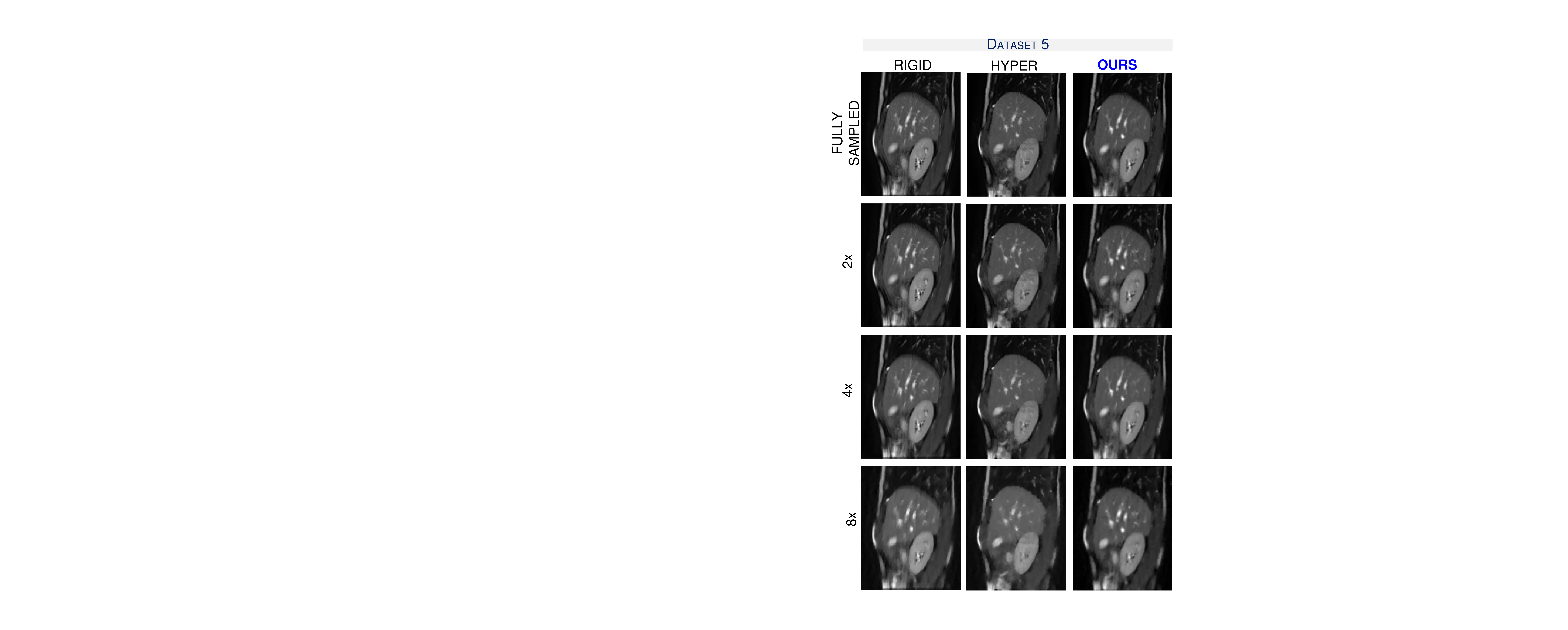}
%    \caption{Reconstruction results for Dataset 5 compared to sequential approaches based on rigid and hyperelastic registration, for different acceleration factors. Our reconstructions show sharper edges and finer details especially for higher acceleration factors. Additionally, we can also notice a better contrast preservation in our results.}
%   \label{fig:dataDrec}
%\end{figure}

Besides, as the acceleration factor increases, the HYPER reconstruction loses the initial contrast, which is particularly visible for the acceleration factor of 8. In contrast our multi-task framework is able to preserve it nicely. This shows the robustness of our method to noise and corrupted data. The benefits of our multi-task framework is prevalent to all datasets (see Section IV of the Supplementary Material for Dataset 2, 4 and 5). 

%This is further validated by the experiments in Fig. 4 where  our approach recovers finer details that are lost in the sequential results. This is particularly visible on the right part of the heart where our reconstruction shows more relevant texture compared to HYPER and RIGID. Again, the rigid deformation model fails to compensate for the whole motion which translates in blurring effects. 

Figs. \ref{fig:sD1B} and \ref{fig:sD3B} show that hyperelastic deformations are better suited to deal with complex physiological motions, as the RIGID reconstructions exhibits strong blurring artefacts, this, due to residual movements amplified as the acceleration factor increases. Also, our method is able to preserve small structures in the kidney and the white blood vessels in the liver even for large acceleration factors contrary to the sequential HYPER approach. For the acceleration factor of 8, the HYPER reconstruction suffers more from staircaising effects than our approach and loses the initial contrast.

Overall, we can show that sharing representation between tasks (i.e. our multi-task approach) leads to better MRI reconstructions than if one performs the task separately. This is strongly supported by two factors, the computational time and the expert agreement.  Following common protocol for MRI evaluation, we performed a user-study, in which we asked to twelve experts (radiologist trainees and experienced) to evaluate reconstructions with all acceleration factors in Figs. \ref{fig:sD1B} and \ref{fig:sD3B} (see Supplementary material for Datasets 2, 4 and 5). 

The outcome is displayed in Fig.~\ref{fig:userstudy}. At left side of this figure, one can see that overall (i.e. for all reconstruction/all acceleration factors) our approach was ranked best, with a $44.29\%$ of agreement, in comparison with the output from the other methods. We also ran the nonparametric Friedman test, per acceleration and therefore accounting for FDR, and we found that there is significant statistical difference- that is, our approach offered the best reconstructions.  

To further support our model performance, we also analyse the difference maps to assess the quality of our registration, and therefore, its motion correction potential. To do this, we inspect the uncorrected average of the difference image, between a reference frame and each individual one, which is displayed at the left side of  Fig.~\ref{fig:diff_maps}. From this column, we can observe that the motion is significant in both the datasets. However, when we inspect the mean difference between our reconstruction and the individual registered acquisitions ($h_t \circ \phi_t$),  at the middle and right sides of  Fig.~\ref{fig:diff_maps}, one can see that the structures are very well-aligned resulting in a much smaller range in difference maps. Overall, our approach successfully corrects for motion even at low undersampling rates, and this effect is preserved for all datasets. 

\smallskip
\textbf{$\triangleright$ Is it Three-Task Better Than Two-Task Framework?}
In a multi-task framework a key factor is to assess if the tasks are not affecting negatively the final MRI reconstruction. To evaluate this factor, we ran a set of experiments of our approach against two multi-task frameworks DC-CS~\cite{Lingala::2015} and  GW-CS~\cite{Royuela-del-Val2016}. These approaches perform only two-tasks (reconstruction and motion estimation), these approaches are our baselines as, to the best of our knowledge, there exists no approaches that join three tasks. 

\begin{figure}[t!]
    \centering
    \hspace{-0.5cm}
    \includegraphics[width=0.5\textwidth]{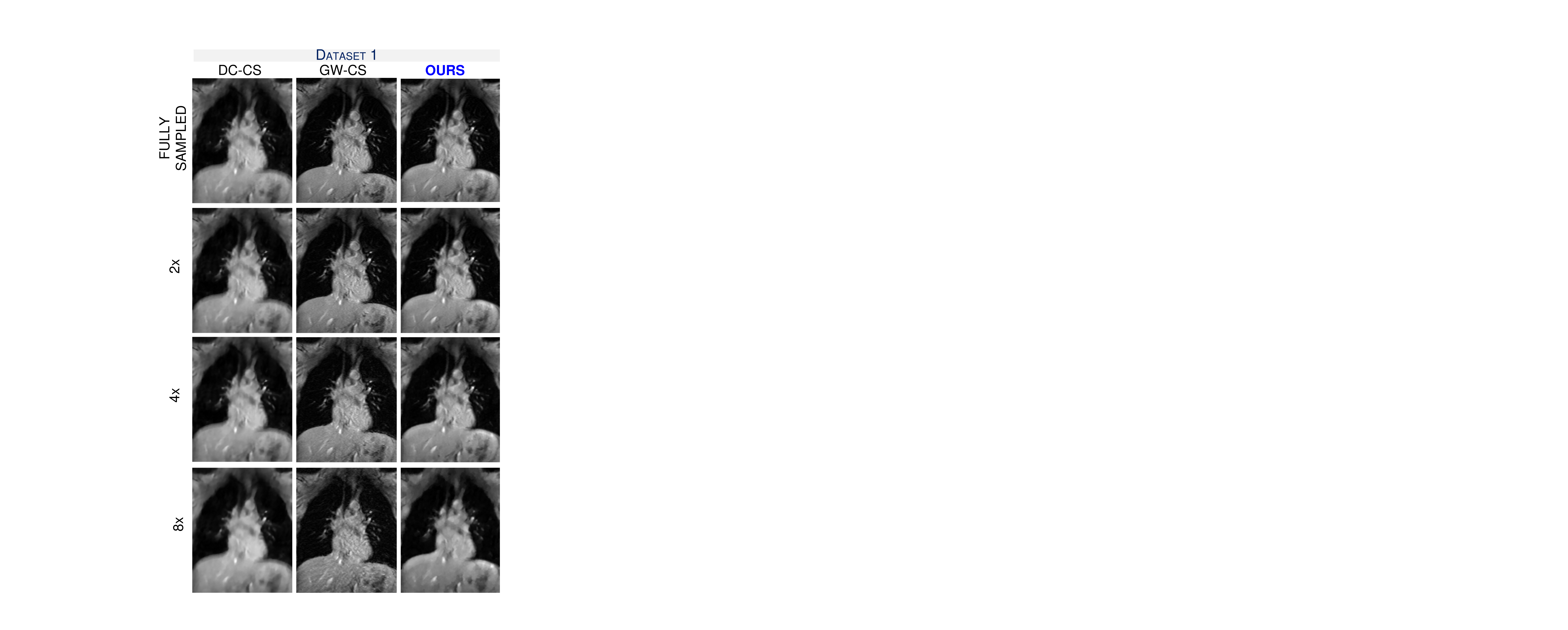}
    \caption{Reconstruction results for Dataset 1 for different acceleration factors and different joint approaches in comparison to our proposed method. We can clearly see that our approach provides the best results in terms of sharp structures and fine texture, while DC-CS results very blurry and GW-CS very noisy. This is particularly accentuated for high undersampling factors.}
    \label{fig:dataArec_joint}
\end{figure}

\begin{figure}[t!]
    \centering
    \hspace{-0.5cm}
    \includegraphics[width=0.5\textwidth, height=14.4cm]{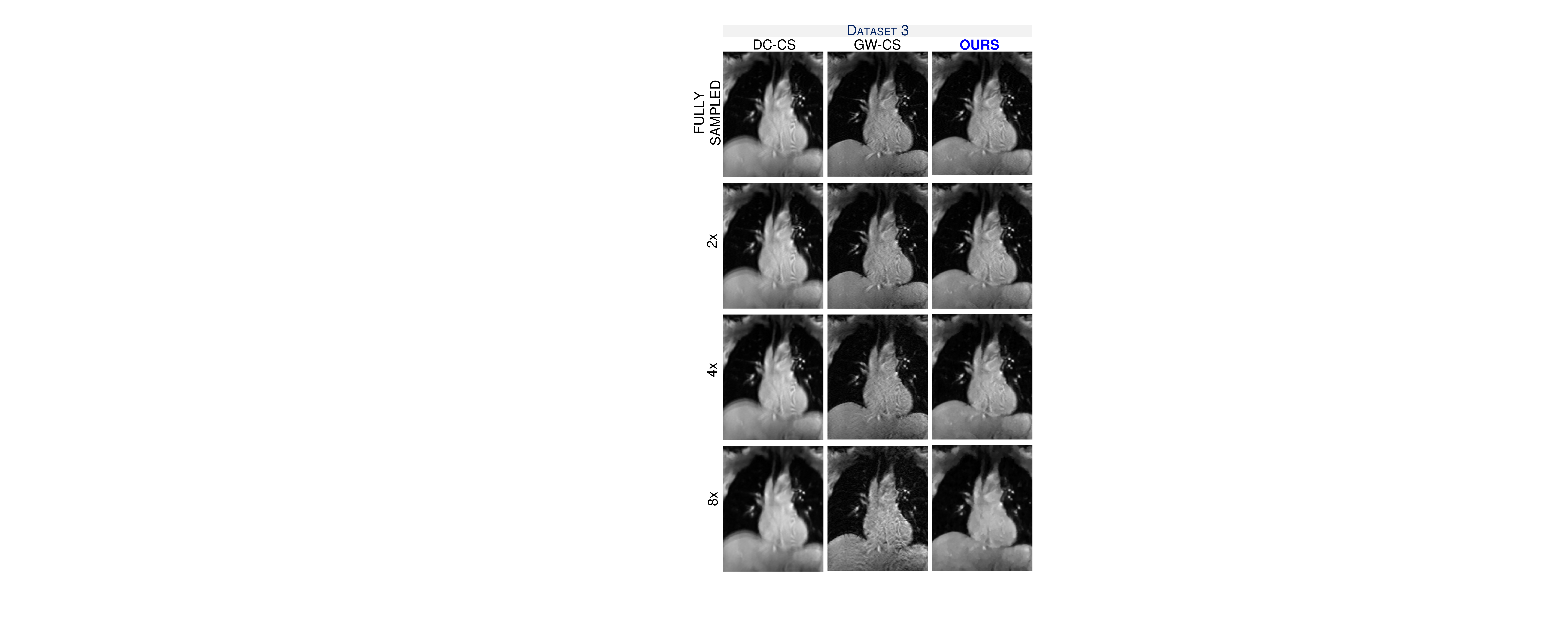}
    \caption{ Comparison of our multi-task framework vs other bi-task approaches on Dataset 5. 
    We note that our approach can preserve fine detail and sharp edges whilst  DC-CS fail to compensate for motion, yielding blurring artefacts. We can also note the GW-CS approach highly amplifies the noise. This is more visible for high undersampling factors. }
    \label{fig:dataDrec_joint}
\end{figure}

The MRI reconstruction from our model against DC-CS and  GW-CS can be seen in Figs. \ref{fig:dataArec_joint} and \ref{fig:dataDrec_joint} (see Supplementary Material Section IV for further visualisation with the remaining datasets). In a closer look at these figures, one can observe very blurred  reconstructions from DC-CS, which can be interpreted as a failure of the model to capture the complex intrinsic nature of physiological motions. In contrast, GW-CS and our reconstructions are sharper even for very low undersampling factors and compensate well for motion.

However, our method is more robust to noise and outliers (as displayed in the compared reconstructions). Although, the GW-CS reconstructions preserve fine textures and small structures, they are noisier than ours. That is, our approach preserve improves in terms of preserving information whilst removing noise in comparison with GW-CS. This effect is elevated even more as the  acceleration factor increases. For example, for an acceleration factor of 8, artefacts and noise are visible in the heart and under the lungs in the GW-CS reconstruction whereas ours is clearer.

Another example of the good performance of our approach can be seen in  Fig.\ref{fig:dataDrec_joint}, in which we are able to retrieve more clinically useful texture and fine details than the  GW-CS technique. 
This is particularly visible in the central part of the heart where noise is visible in the GW-CS reconstructions especially as the acceleration factor increases. 

To further support our results, we display, at the right side of Fig.~\ref{fig:userstudy}, the overall outcome of the user-study. From this plot, we can see that the majority of the expert agreed that our reconstructions are better than the compared approaches. Although, the second best ranked  is GW-CS,  it fails to correct for noise which compromises the readability of the underlying texture.  Moreover, as soon as the acceleration factor increases, the noise level jumps, reducing drastically the readability and interpretability of the GW-CS reconstructions whereas our method retrieves relevant small structures and denoises the reconstruction.

\begin{figure}[t!]
    \centering
    \hspace{-0.5cm}
    \includegraphics[width=0.5\textwidth]{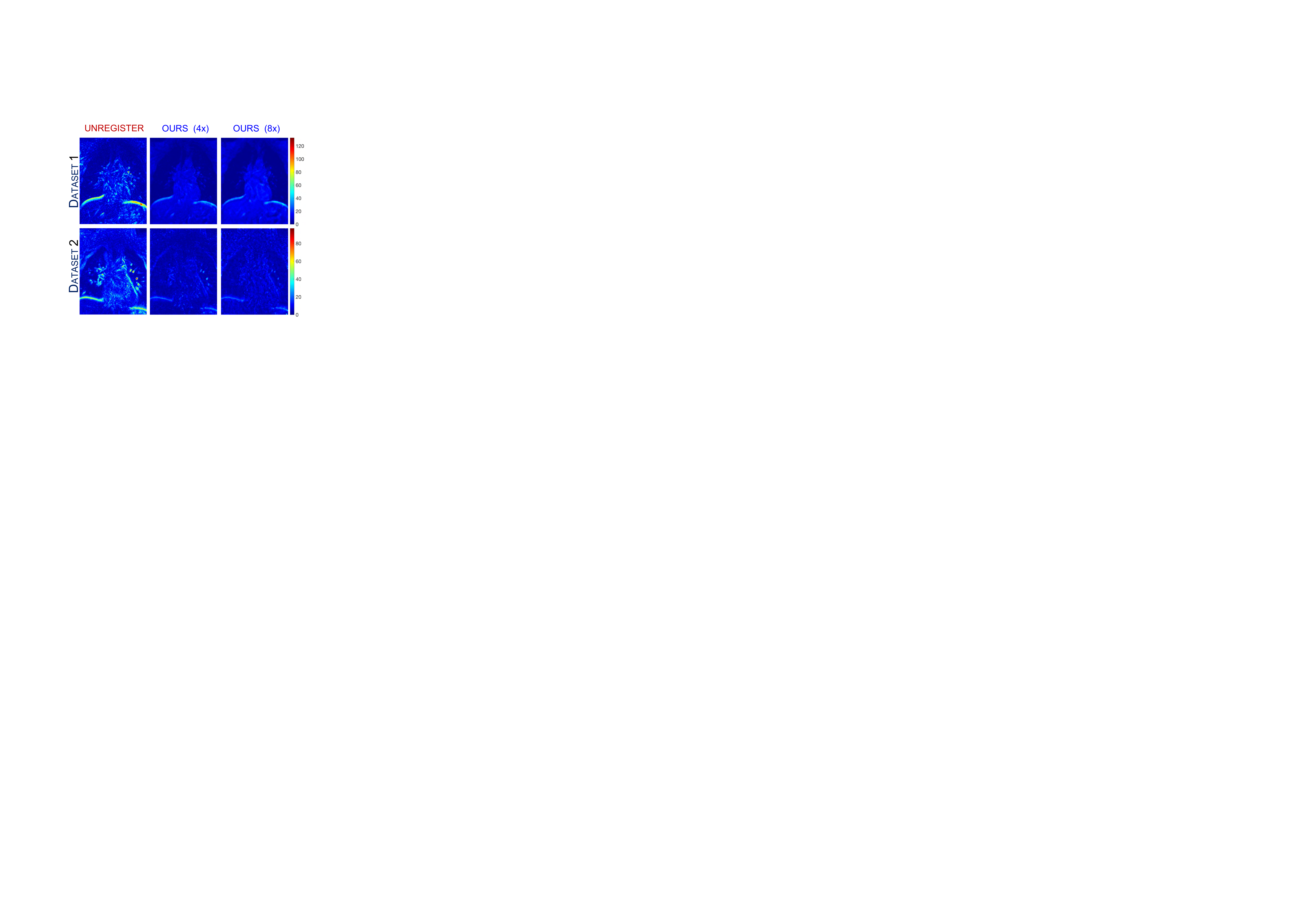}
    \caption{Difference maps. From left to right: average difference maps of the unregistered sequence, average difference map of the corrected sequence for an acceleration factor of  4 and 8 and the colorbar for Datasets 1 and 2.}
    \label{fig:diff_maps}
\end{figure}

%How Does our Multi-task Approach Manage CPU Time?
\smallskip
\textbf{$\triangleright$ The CPU Cost of Our Multi-Tasking Approach -  Does It Pay Off?} From previous sections, we demonstrated that our approach achieves a better reconstruction in comparison with other approaches, however, does this improvement come to pay off in terms of computational time? Therefore, in this section, we highlight the computational advantages of our model. We remark to the reader that all comparisons were run under the same conditions.

The CPU time, for all approaches, is displayed  at Fig.~\ref{fig:cpu} for several acceleration factors (the remaining can be found in Section IV of the Supplementary Material). Firstly, we observe that, in terms of sequential models, our model outperforms RIGID and HYPER reporting the lower CPU time. However and in terms of the other multi-task methods, the computational time for GW-CS is much longer compared to our method (and only performing two-tasks). Whilst the DC-CS approach readily competes with our approach, from the image quality standpoint our method offers by far better results in terms of reconstruction. We emphasise that the CPU times for our model and DC-CS are still on the same range, but the proposed method is performing three-tasks instead of two like the DC-CS. These advantages highlight our optimisation scheme that allows computing a complex problem in a very computational tractable form.

\section{Conclusion}
In this work, we proposed a novel variational multi-task framework to achieve higher quality and super-resolved reconstructions. Our method compensates for motion due to breathing in undersampled data. To the best of our knowledge, it is the first variational framework that allows computing three tasks jointly. 

In particular, our multi-task framework is composed of four major components: an $L^2$ fidelity term intertwining MRI reconstruction, super-resolution and registration; a weighted TV ensuring robustness of our method to intensity changes by promoting edge alignment; a TV regulariser of the super-resolved reconstruction; and a hyperelasticity-based regulariser modelling biological tissue behavior and allowing for large and smooth deformations. We exploit the temporal redundancy to correct for blurring artefacts and increase image quality. As a result, we obtain a single highly resolved and clear image reconstruction representing the true underlying anatomy. 

The advantages of our model is that we  guarantee preservation of anatomical structures whilst keeping fine details and less blurry and noise artefacts in the final reconstructions. We extensively evaluated our method against sequential and another multi-task methods from the body of literature. We demonstrated that our method achieves the best results whilst demanding low CPU time.  Our method was further supported by a user-study (experts).

\textbf{Future Work.} This multi-task framework is indeed very well-suited for the plug-and-play setting when one (or more) imaging tasks could be replaced by different algorithms. For instance, the modelling of the regularisation functional for the high resolution image reconstruction could be replaced in a plug-and-play fashion. It opens the door to hybrid methods as deep-learning can be used for the super-resolution task.

\section*{Acknowledgment}
VC acknowledges the financial support of the Cambridge Cancer Centre and the Cambridge Research UK. Support from the Centre for Mathematical Imaging in Healthcare (CMIH), Cantab Capital Institute for the Mathematics of Information (CCIMI) and the National Physical Laboratory (NPL) are greatly acknowledged.

% Can use something like this to put references on a page
% by themselves when using endfloat and the captionsoff option.

% trigger a \newpage just before the given reference
% number - used to balance the columns on the last page
% adjust value as needed - may need to be readjusted if
% the document is modified later
%\IEEEtriggeratref{8}
% The "triggered" command can be changed if desired:
%\IEEEtriggercmd{\enlargethispage{-5in}}

% references section
\bibliographystyle{IEEEtran}
\bibliography{refs}

%\bigskip \bigskip

\begin{table*}[h!]
\begin{tabular}{c}
\LARGE \textsc{Supplemental Material for the Article: }                                                                                                                                                                                                                                 \\ \\ \\
\begin{tabular}[c]{@{}c@{}} \Huge Variational Multi-Task MRI Reconstruction: Joint  \\    \Huge Reconstruction, Registration and Super-Resolution\end{tabular}                                                                                            \\ \\
\begin{tabular}[c]{@{}c@{}} \large Veronica Corona, Angelica I. Aviles-Rivero, No\'emie Debroux, Carole Le Guyader, \\  \large Carola-Bibiane Sch\"onlieb\end{tabular}
\end{tabular}
\end{table*}

%\newpage
\setcounter{section}{0}
\section{Outline}
This supplementary material extends the experimental results, practicalities and theoretical analysis to further support our proposed multi-task framework.  This document is organised as follows. 
\begin{itemize}
    \item \textbf{Section II}. In this section, we offer a detailed explanation about the influence of the parameters of out variational multi-task model.
    \item \textbf{Section III}. We further describe the user-study protocol that we follow to support our outcomes and comparisons.  
   \textbf{ \item \textbf{Section IV}}. We display further visual comparisons of our model against some works from the state-of-the-art. This section further validate our claim in terms of our model performance. 
    \item \textbf{Section V}. We provide further mathematical details for the paper. First, we recall the definition of the weighted total variation, then we introduce the proof, of Theorem III.1 regarding the existence of minimisers result. 
\end{itemize}

% comment: I included in the main paper the reference to this section
\section{Parameters Reasoning of Our Multi-Task Framework} 
In this section, we discuss the influence of each parameter. $a_1$ and $a_2$ control the regularisation of the deformations. Whilst the former acts on the smoothness of the deformations, the latter can be seen as a measure of rigidity. That is- the bigger $a_2$ is, the more rigid the deformations are and the less accurate the registration becomes. It thus behaves as a trade-off between the ability to handle large and nonlinear deformations, and topology preservation.

Moreover, $\gamma_1$ and $\gamma_2$ are chosen big, to ensure the closeness of the auxiliary variables to the original ones, and $\theta$ is set small for the same purpose. $\gamma_3$ weights the fidelity term joining the three tasks, and it is often chosen to be close to 1. Finally, $\alpha$ offers a balance between regularity and fidelity to the data for the super-resolved reconstructed image $u$.  %We describe our full approach in~\ref{myalg1}.

Finally, we also explicitly define he regridding algorithm that we used in Section III.D from the main paper. This is displayed at Algorithm~\ref{alg::regridding}.

\begin{figure}[t!]
    \centering
    \includegraphics[width=0.45\textwidth]{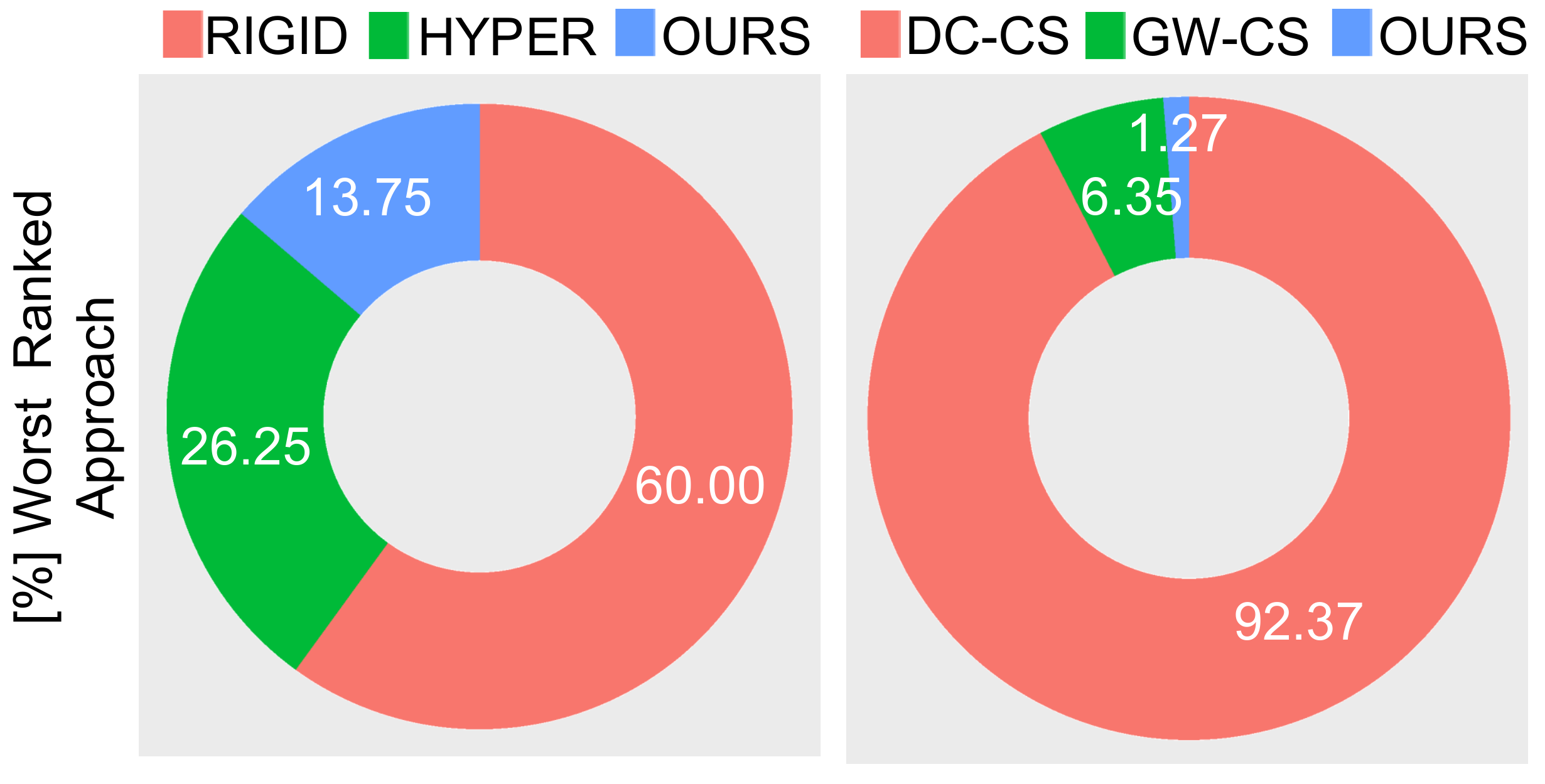}
    \caption{Plots displays the percentage of responses supporting the worst ranked approach.}
    \label{fig:userstudy2}
\end{figure}
\begin{algorithm}[t!]
\begin{algorithmic}[1]
\State Initialisation $z^0=0$, $\phi^0=\mbox{Id}$, $regrid\_count=0$.
\For{$n=1,\cdots,N$}
\State Update $z^n$ and $\phi^n$.
\If{$\mbox{det}\,\nabla\phi^n<tol$}
\State $regrid\_count=regrid\_count+1$.
\State $h=h\circ \phi^{n-1}$.
\State Save $tab\_phi(regrid\_count)=\phi^{n-1}$.
\State $\phi^{n}=\mbox{Id}$, $z^{n}=0$.
\EndIf
\EndFor
\If{$regrid\_count>0$}
\State $\phi^{final}=tab\_phi(1)\circ \cdots \circ tab\_phi(regrid\_count)$.
\EndIf
\end{algorithmic}
\caption{Regridding algorithm}
\label{alg::regridding}
\end{algorithm}

\section{Further Details on the User-Study}
To further support our results, we performed a user-study following standard protocol in clinical settings. That is- to ask experts and trainees to evaluate the reconstructions based on a scoring system. For this study, we had twelve experts and trainees from the area of radiology.

\begin{figure*}[t!]
    \centering
    \includegraphics[width=1\textwidth]{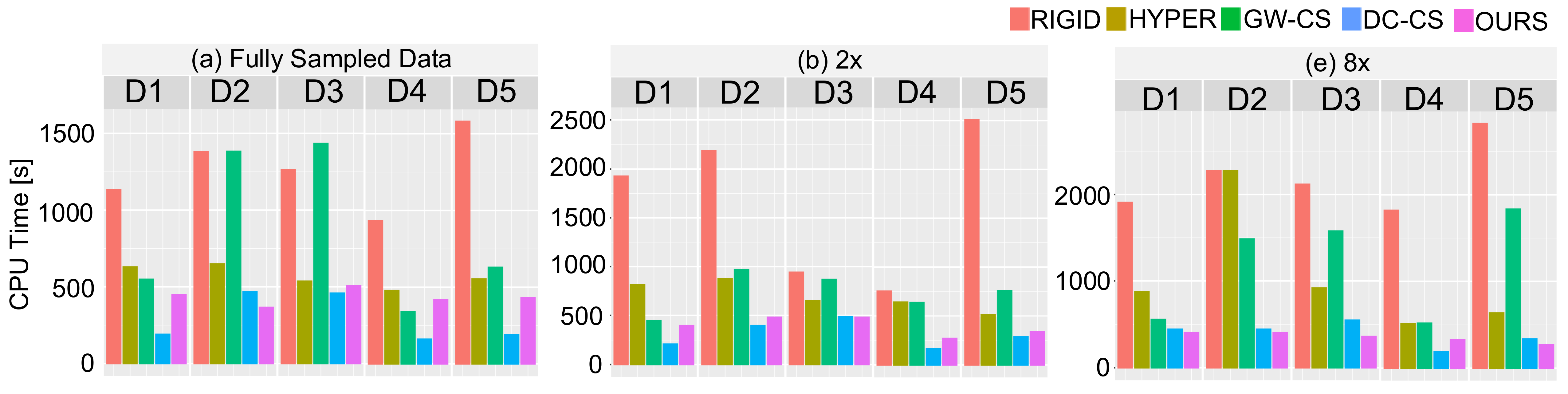}
    \caption{Computational performance comparison between sequential (three tasks), joint (two tasks) and our approach. Elapsed time in seconds. The sequential approaches are definitely much slower than our proposed method. We can see that our approach is comparable and competitive with joint approaches although slightly slower than the DC-CS, which, however, only computes two tasks..}
    \label{fig:cpu2}
    %\vspace{-2cm}
\end{figure*}

\begin{figure}[t!]
    \centering
    \includegraphics[width=0.5\textwidth]{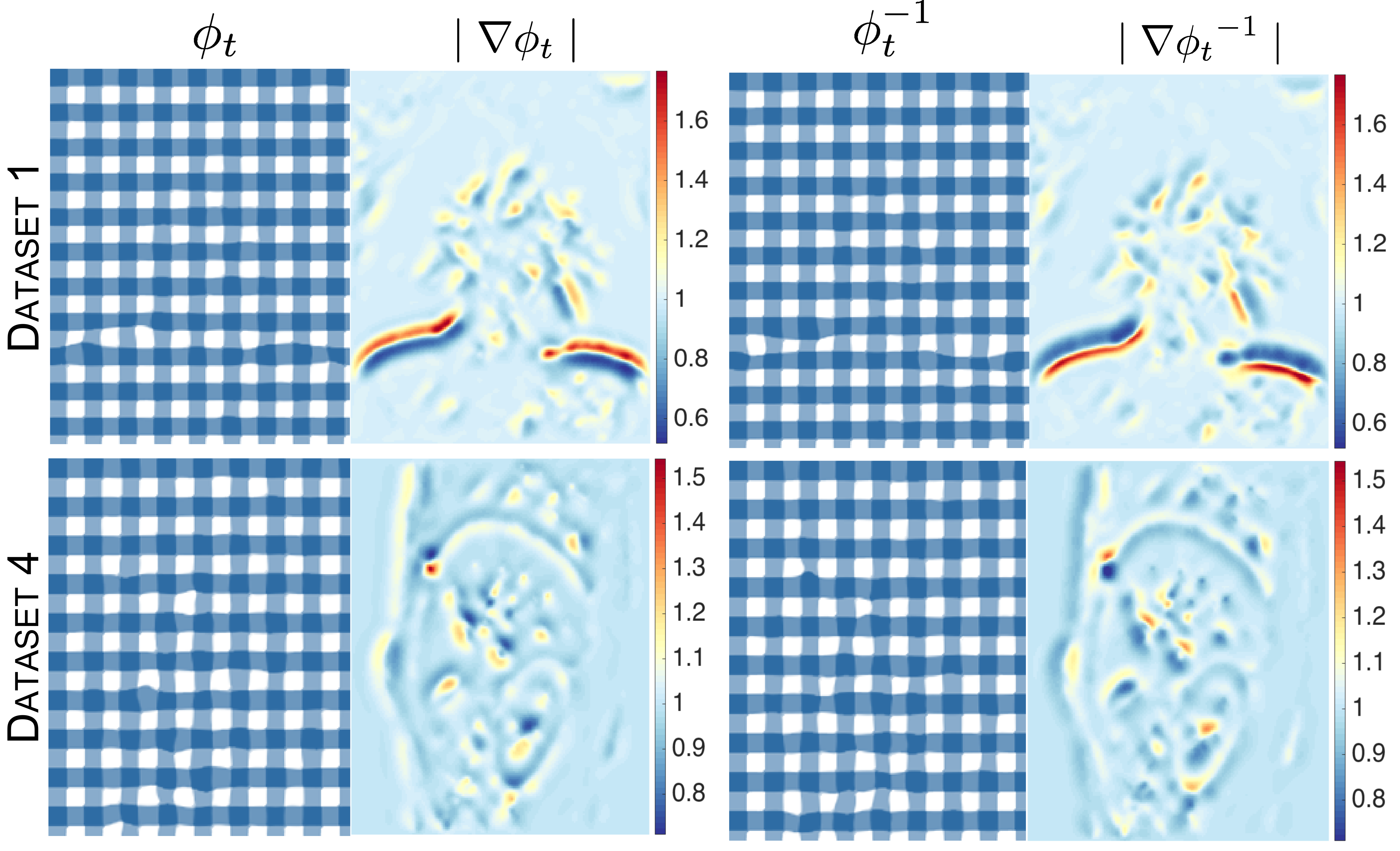}
    \caption{Estimated motion and determinant maps of the deformation Jacobian. This is
shown for the transformation $\phi_i$ and its inverse $\phi_i^{-1}$ for two datasets (1 and 4).}
    \label{fig:detMaps}
    %\vspace{-2cm}
\end{figure}

\textbf{Scoring Procedure.} We create an electronic survey in which, after giving the instructions to the users, they were provided with two-part evaluation. For the first part, they evaluated the reconstructions related to sequential models. To do this, they were provided with the reconstructions of the five dataset, for acceleration factor =\{fully sampled, 2x, 4x, 6x, 8x\}, and for the approaches  =\{RIGID, HYPER, OURS\} (see main paper for description on these models). With the purpose of capturing they scores, we design a three-point Likert rating scale in which experts were asked to indicate the level of agreement, ranging from best reconstruction to worst reconstruction. Similarly, this protocol was followed to evaluate other multi-task methods = \{DC-CD, GW-CS, OURS\}. Fig.~\ref{fig:userstudy2} displays the results in terms of \% the worst ranked for all compared approaches. 

%Prior to analysis, we assigned numerical values to the Likert scaleresponse options, ranging from zero forStrongly Disagreeto four forStrongly Agree. Next, we averaged eachparticipant’s numerical Likert s

\textbf{Statistical Analysis.} The circle plots displayed in the main paper and in Fig.~\ref{fig:userstudy2} of this supplementary material, reflects the averaged results of all scoring (i.e. for all acceleration and all reconstructions). However, to test if there was statistical significant difference between the approaches we took into account the the scoring per each reconstruction, and ran the nonparametric Friedman test to compare the three approaches (for both sequential and multi-task frameworks). We also applied corrections for multiple comparisons during the statistical analysis. Overall, we found statistical significant difference between our approach and the compared ones. This supports our previous discussions on our model offering the best reconstruction in comparison to the compared approached for both sequential and joint models.

\section{Further Visual Experiments}
In this section, we extend the reconstruction comparisons from the main paper. In particular, for the sequential model we now included Dataset 2, 4 and 5. The results are displayed in Figs.~\ref{fig:sD2B}, \ref{fig:sD4B} and \ref{fig:sD5B} in which we observe that our joint approach successfully corrects for motion and aligns the different acquisitions resulting in a sharp reconstruction that contains very fine details, such as preserving the texture in the heart and vessels in the liver and kidney. It further support the finding from the main paper.

Moreover, we also provide further evaluation of our framework against other multi-task approaches (DC-CS~\cite{Lingala::2015} and  GW-CS~\cite{Royuela-del-Val2016}). The reconstruction results are displayed at Figs.~\ref{fig:jD2}, \ref{fig:jD4} and \ref{fig:jD5}. In a closer look at those reconstructions, one can observe that our approach produces very sharp reconstructions while preserving the texture in the relevant anatomical areas. This is validated for different acceleration factors. In contrast, DC-CS fails to carefully align the acquisition as we can see from the blurring artefacts around the edges. Additionally all the small details get blurred and smeared out. The GW-CS seems to perform reasonably well for fully-sampled data, however failing to remove the noise. This is accentuated at higher reduction factors, where the reconstructions contain significant noise which compromises the clinical interpretation of the images. 

Moreover, we present for two datasets the estimated motion $\phi_t$ and its inverse $\phi_t^{-1}$ for a given time frame in Fig.~\ref{fig:detMaps}. We can see that our proposed approach produces a reasonable estimation of the motion, where the
motion fields are visualised by a deformation grid. Additionally, we show the corresponding Jacobian determinant maps for the deformations. In these plots, we can see that the determinants remain positive meaning that our model ensures topology preservation both from a mathematical and practical point of view.
The  values are interpreted as follows: small deformations when values are closer to 1, big
expansions when values are greater than 1, and big contractions when values are
smaller than 1. Moreover, one can observe that the determinants remain positive in all cases, that is to say, our estimated deformations are physically meaningful and preserve the topology as required in a registration framework.

Finally, we also report in Fig.~\ref{fig:cpu2}, the CPU times for the remaining accelaretion factors. Again, we can see that our approach allows computing a complex problem in a very computationally tractable way. It is indeed much faster than sequential approaches and than the GW-CS. It is comparable to the DC-CS approach, which, however, only computes two tasks. 
\begin{figure}
    \centering
    \hspace{-0.5cm}
    \includegraphics[width=0.5\textwidth]{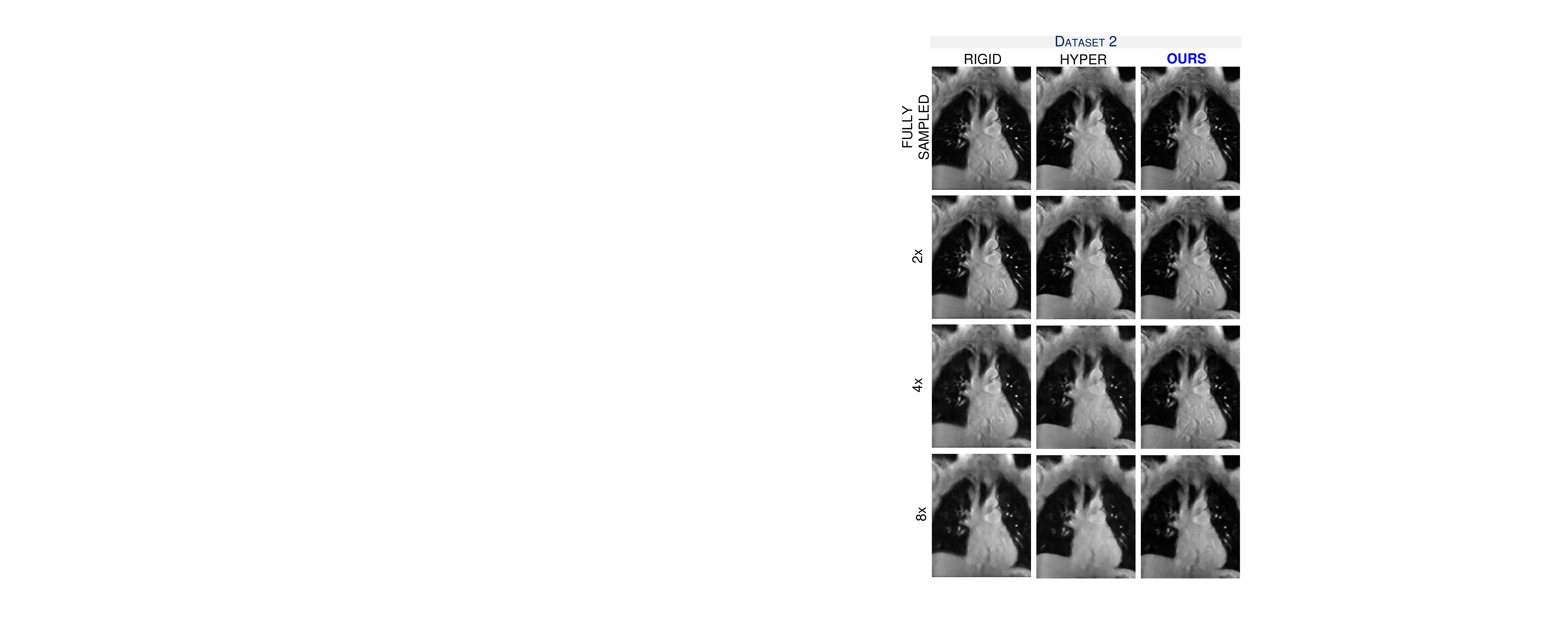}
        \vspace{-0.2cm}
    \caption{Reconstruction results for Dataset 2 compared to sequential approaches based on rigid and hyperelastic registration, for different acceleration factors.}
    \label{fig:sD2B}
\end{figure}
\begin{figure}[t!]
    \centering
    %\hspace{-0.5cm}
    \includegraphics[width=0.5\textwidth]{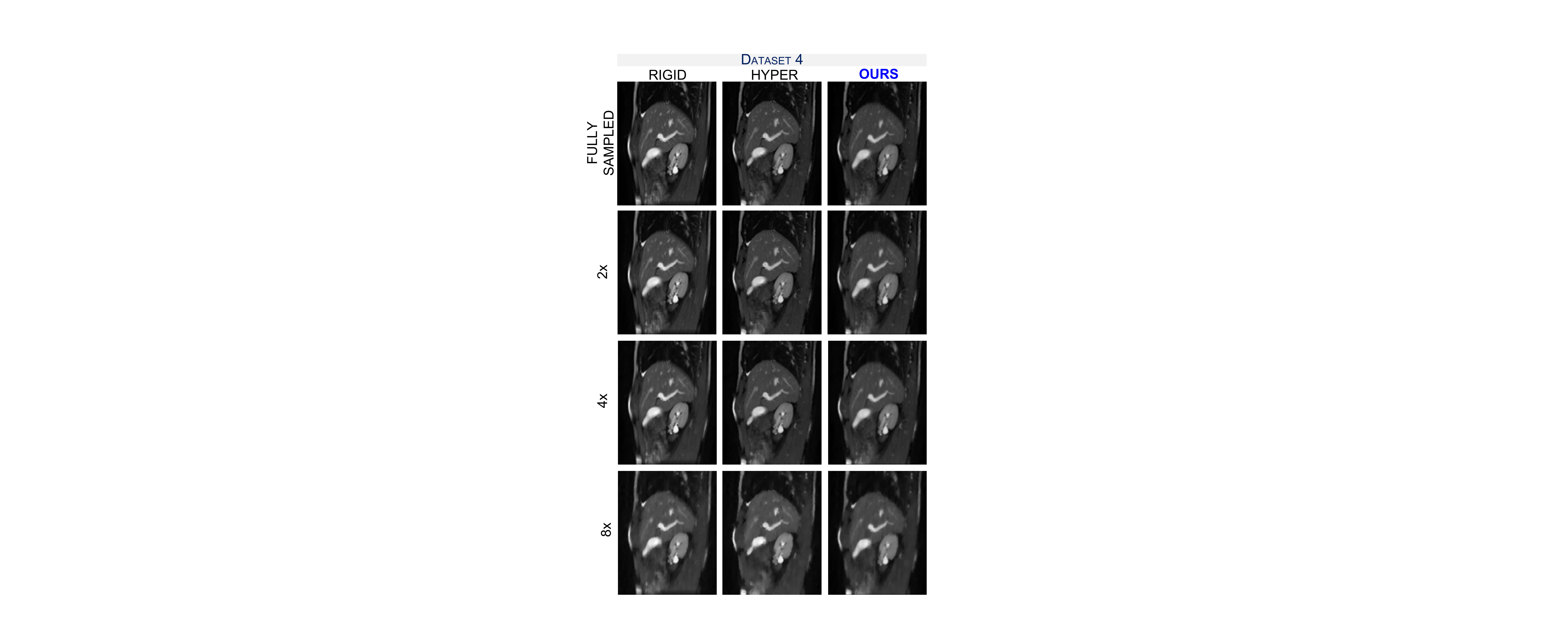}
    \vspace{-0.5cm}
    \caption{Reconstruction results for Dataset 4 compared to sequential approaches based on rigid and hyperelastic registration, for different acceleration factors.}
   \label{fig:sD4B}
\end{figure}

\begin{figure}[t!]
    \centering
    %\hspace{-0.5cm}
    \includegraphics[width=0.5\textwidth, height=0.78\textwidth]{}
    \caption{Reconstruction results for Dataset 5 compared to sequential approaches based on rigid and hyperelastic registration, for different acceleration factors. Our reconstructions show sharper edges and finer details especially for higher acceleration factors. Additionally, we can also notice a better contrast preservation in our results.}
   \label{fig:sD5B}
\end{figure}

\begin{figure}
    \centering
    \hspace{-0.5cm}
    \includegraphics[width=0.5\textwidth]{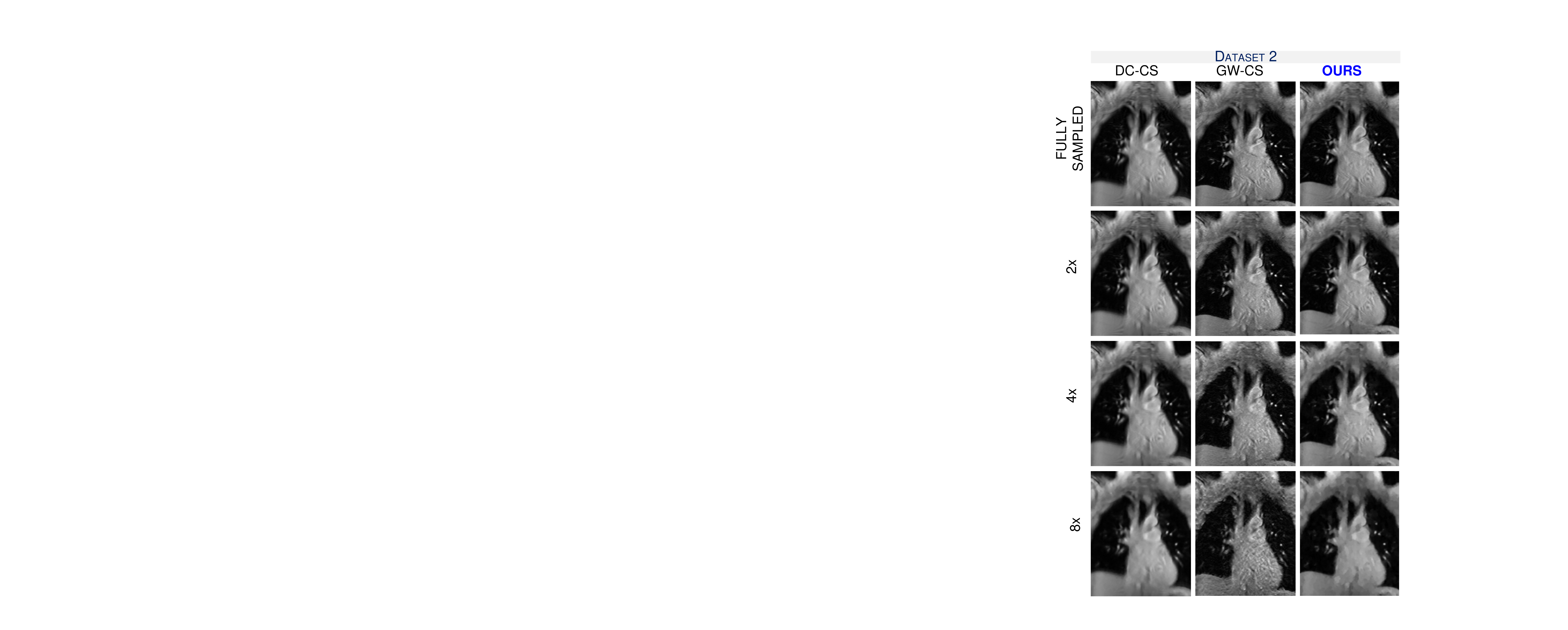}
    \caption{Reconstruction results for Dataset 2 for different acceleration factors and different joint approaches in comparison to our proposed method. We can clearly see that our approach provides the best results in terms of sharp structures and fine texture, while DC-CS results very blurry and GW-CS very noisy. This is particularly accentuated for high undersampling factors.}
    \label{fig:jD2}
\end{figure}

\begin{figure}
    \centering
    \hspace{-0.5cm}
    \includegraphics[width=0.5\textwidth]{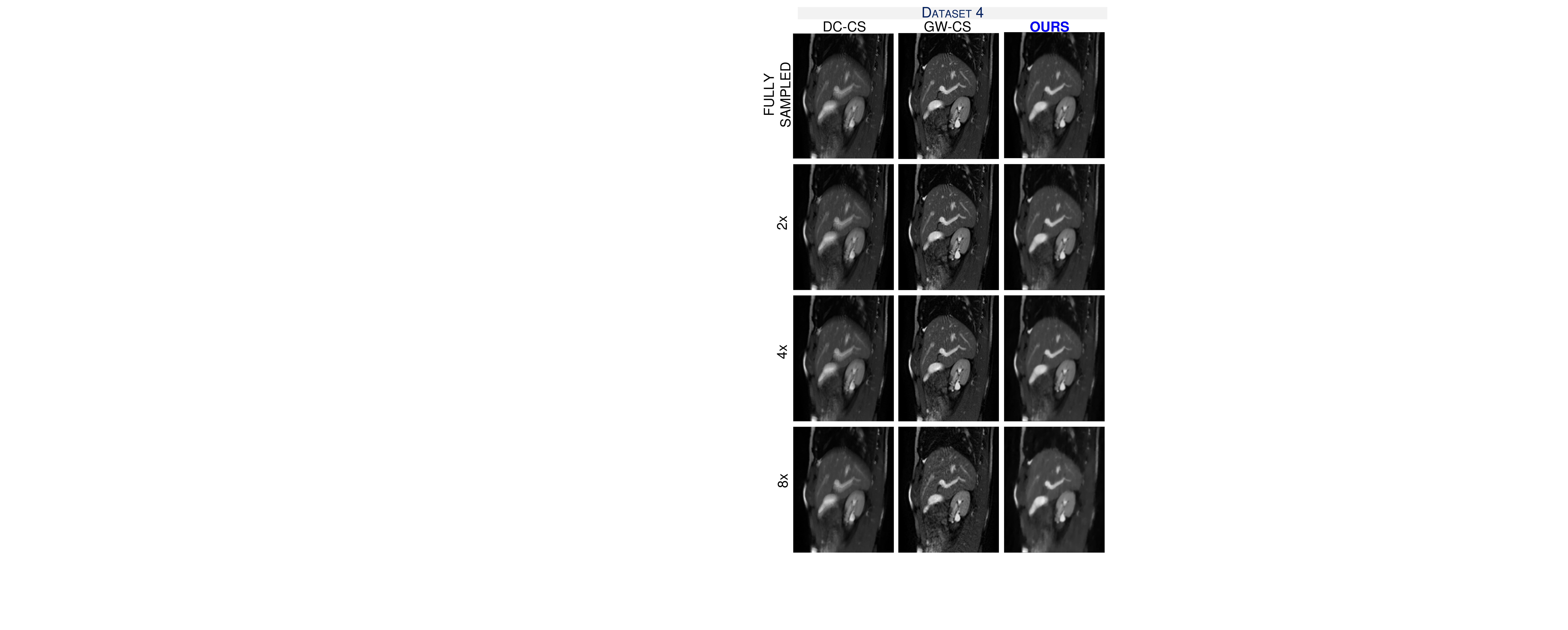}
    \caption{Reconstruction results for Dataset 4 for different acceleration factors and different joint approaches in comparison to our proposed method. We can clearly see that our approach provides the best results in terms of sharp structures and fine texture, while DC-CS results very blurry and GW-CS very noisy. This is particularly accentuated for high undersampling factors.}
    \label{fig:jD4}
\end{figure}

\begin{figure}
    \centering
    \hspace{-0.5cm}
    \includegraphics[width=0.5\textwidth,height=14.32cm]{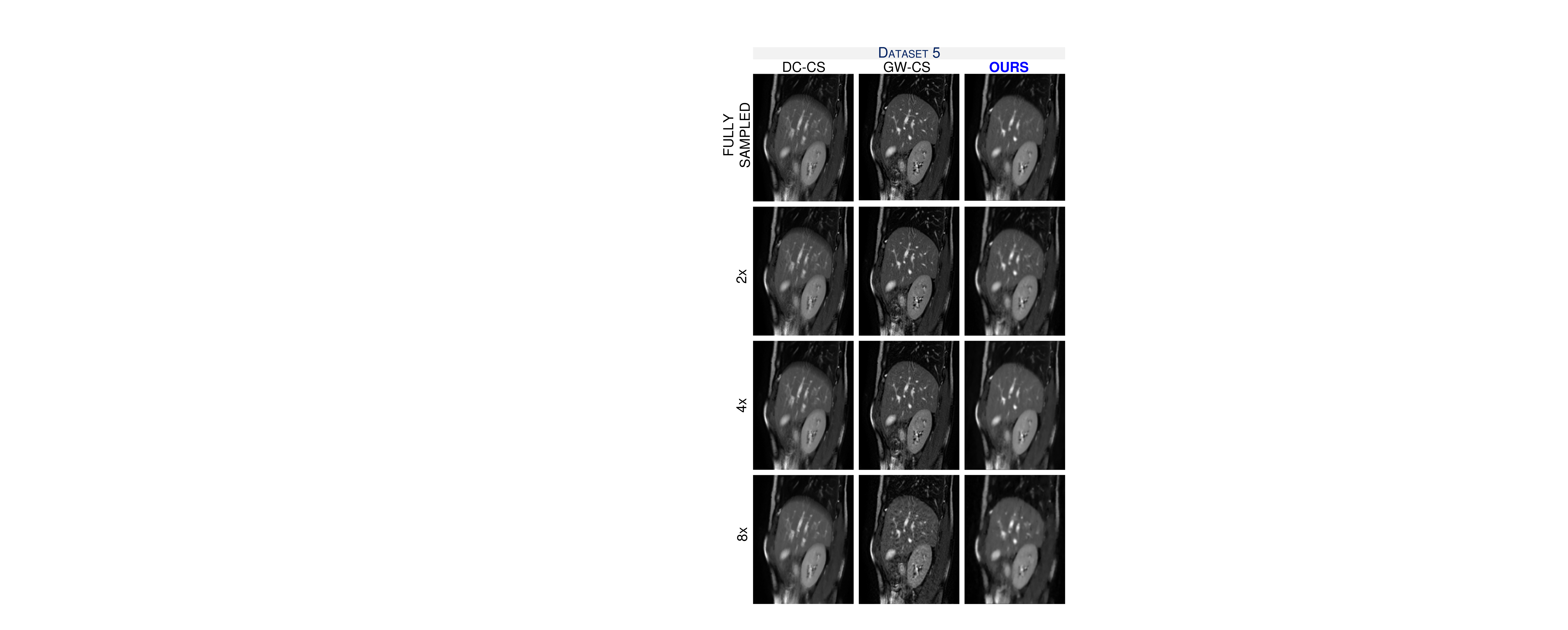}
    \caption{Reconstruction results for Dataset 4 for different acceleration factors and different joint approaches in comparison to our proposed method. We can clearly see that our approach provides the best results in terms of sharp structures and fine texture, while DC-CS results very blurry and GW-CS very noisy. This is particularly accentuated for high undersampling factors.}
    \label{fig:jD5}
\end{figure}

%\begin{algorithm}[t!]
%\caption{Our Proposed Method}
% \label{alg::ours}
%\begin{algorithmic}[1]
%\State Initialisation : $\phi_t=Id$, $z_t=0$, $h_t=u$, $f_t=u$, $t=1,\cdots,T$, $u=\mathcal{F}^*x_1$.
%\For{$k=1,\cdots,nbIter$}
%\For{$t=1,\cdots,T$}
%\State $z^{k+1}_t=z^{k}_t$, $u^{k+1}_t=u^{k}_t$.
%\For{scale=coarse to fine}
%\State Solve sub-problems 1 and 2 for $z^{k+1}_{t,scale}$ and $\phi^{k+1}_{t,scale}$ including the regridding step.
%\EndFor
%\State Solve sub-problem 3.
%\State Solve sub-problem 4.
%\EndFor
%\State Solve sub-problem 5.
%\EndFor
%\algstore{myalg1}
%\end{algorithmic}
%\end{algorithm}

%\section{Additional results on other datasets}
%\label{sec:otherdata}

%\section{Deformation grid and Jacobian determinant maps}
%\label{sec:jacobian}
%If it becomes too small, the process is reinitialised taking $h_t\circ \phi_t$ as the new $h_t$ and saving the deformation $\phi_t$ before setting it to $\mbox{Id}$ and $z_t$ to $0$. The final deformation is given by the composition of all the saved deformations. The regridding algorithm is given in Algorithm~\ref{alg::regridding} and the summary of the overall proposed optimisation framework is presented in Algorithm~\ref{alg::ours}.  and the summary of the overall proposed optimisation framework is presented in Algorithm~\ref{alg::ours}.  move to supplemental

\section{Further Mathematical Details}
\label{sec:proof}
We recall the definition of the weighted BV-space and the associated weighted TV. 

\begin{definition}[\hspace{-0.1cm}{\cite[Definition 2]{debroux-bib:baldi}}]
Let $w$ be a weight function satisfying some properties (defined in \cite{debroux-bib:baldi} and fulfilled by $g_t$). We denote by $BV_w(\Omega)$ the set of functions $f\in L^1(\Omega,w)$, which are integrable with respect to the measure $w(x)dx$, such that:
$
\sup\left\{  \int_\Omega f \mbox{div}(\varphi)\,dx\,:\, |\varphi|\leq w \text{ everywhere},\, \\
 \varphi\in Lip_0(\Omega,\mathbb{R}^2) \right\}$ $<+\infty,
$
\noindent
where $Lip_0(\Omega,\mathbb{R}^2)$ is the space of Lipschitz continuous functions with compact support. We denote by $TV_w$ the previous quantity.
\end{definition}
\subsection{Detailed Proof of Theorem III.1}
We report the detailed proof of the Theorem III.1, from the main paper , regarding the existence of minimisers. From Section III.C from the main paper. 
\begin{proof}

The proof is based on arguments coming from the calculus of variations and is divided into three steps.
We recall the assumptions on $g_t$: $g_t:\mathbb{R}^+ \rightarrow \mathbb{R}^+$, $g_t(0)=1$, $g_t$ is strictly decreasing, $\underset{r\rightarrow +\infty}{\lim}\,g(r)=0$, and there exists $c>0$ such that $c\leq g_t\leq 1$ everywhere. We also have $\phi_t:\bar{\Omega} \rightarrow\bar{\Omega}$ thanks to Ball's results \cite{ball_1981} as seen later. 

\medskip
{\bf{Coercivity inequality}}: We first have that $G(0,(\mbox{Id})_{t=1,\cdots,T})=4a_1\mbox{meas}(\Omega) + \frac{1}{2T}\underset{t=1}{\overset{T}{\sum}}\|x_t\|_{L^2(\mathbb{R}^2)}^2<+\infty$. Let $u \in BV(\Omega')$, $\phi_t\in \mathcal{W},\, \forall t\in \{1,\cdots,T\}$ such that $(\mathcal{C}u)\circ \phi_t \in BV_{g_t,0}(\Omega),\forall t\in \{1,\cdots,T\}$, we then derive a coercivity inequality:
  \begin{align*}
  &G(u,(\phi_t)_{t=1,\cdots,T}) \geq \frac{1}{T}\underset{t=1}{\overset{T}{\sum}} (a_1\|\nabla \phi_t\|_{L^4(\Omega,M_2(\mathbb{R}))}^4 \\
  &+ \frac{a_2}{2}\|\mbox{det}\nabla \phi_t\|_{L^4(\Omega)}^4 + \frac{a_2}{2}\|\frac{1}{\mbox{det}\nabla \phi_t}\|_{L^4(\Omega)}^4\\
  &+\frac{1}{4} \|(\mathcal{C}u)\circ \phi_t^{-1}\|_{L^2(\Omega)}^2+\delta \operatorname{TV}_{g_t}((\mathcal{C}u)\circ \phi_t^{-1})+\alpha \operatorname{TV}(u)\\
  &-\frac{1}{2}\|x_t\|_{L^2(\mathbb{R}^2)}^2-3a_2\mbox{meas}(\Omega)).
  \end{align*}
  Indeed, $((\mathcal{C}u)\circ \phi_t^{-1})\in BV_{g_t,0}(\Omega)\subset BV(\Omega)\subset L^2(\Omega)$ (\cite{debroux-bib:baldi}) with continuous embedding, and $((\mathcal{C}u)\circ \phi_t^{-1})_e$ is the extension by $0$ outside the domain $\Omega$ of $(\mathcal{C}u)\circ\phi_t^{-1}$, then $\|((\mathcal{C}u)\circ \phi_t^{-1})_e\|_{L^2(\mathbb{R}^2)} = \|(\mathcal{C}u)\circ\phi_t^{-1}\|_{L^2(\Omega)}<+\infty$, and finally Plancherel's theorem gives us $\|\mathcal{F}((\mathcal{C}u)\circ \phi_t^{-1})_e\|_{L^2(\mathbb{R}^2)}^2\leq \|\mathcal{A}((\mathcal{C}u)\circ \phi_t^{-1})_e\|_{L^2(\mathbb{R}^2)}^2 = \|((\mathcal{C}u)\circ\phi_t^{-1})_e\|_{L^2(\mathbb{R}^2)}^2=\|(\mathcal{C}u)\circ \phi_t^{-1}\|_{L^2(\Omega)}^2$.\\
  The infimum is thus finite. Let $(u_n,\,(\phi_{t,n})_{t=1,\cdots,T})_n \in \{u\in BV(\Omega'),\, \phi_t\in \mathcal{W},\, \forall t=1,\cdots,T\,|\, (\mathcal{C}u)\circ\phi_t^{-1}\in BV_{g_t,0}(\Omega),\, \forall t\in\{1,\cdots,T\}\}$ be a minimizing sequence such that $\underset{n\rightarrow +\infty}{\lim}G(u_n,(\phi_{t,n})_{t=1,\cdots,N})=\underset{(u,(\phi_t)_{t=1,\cdots,T})\in \mathcal{U}}{\inf} G(u,(\phi_t)_{t=1,\cdots,T})$.
  
\medskip {\bf{Extraction of converging subsequences}}: From the previous coercivity inequality and the finiteness of the infimum we deduce that:
  \begin{itemize}
  \item $(\phi_{t,n})_n$ is uniformly bounded according to $n$ in $W^{1,4}(\Omega,\mathbb{R}^2)$ for each $t=1,\cdots,T$ by using the generalized Poincar\'e's inequality and the fact that $\phi_{t,n}$ is equal to the identity on the boundary $\partial \Omega$. 
  \item $(\mbox{det}\nabla \phi_{t,n})_n$ is uniformly bounded according to $n$ in $L^4(\Omega)$ for each $t=1,\cdots,T$.
  \item $((\mathcal{C}u_n)\circ \phi_{t,n})_n$ is uniformly bounded according to $n$ in $BV_{g_t}(\Omega)$ and thus in $BV(\Omega)$ since $c\leq g_t\leq 1$ everywhere for each $t=1,\cdots,T$.
  \end{itemize}
  Therefore, we can extract subsequences (but for the sake of conciseness we keep the same notations) such that: 
  \begin{itemize}
  \item $\phi_{t,n} \underset{n\rightarrow +\infty}{\rightharpoonup} \bar{\phi}_t$ in $W^{1,4}(\Omega,\mathbb{R}^2)$ for each $t=1,\cdots,T$ and by continuity of the trace operator, we deduce that $\bar{\phi}_t\in \mbox{Id}+W^{1,4}_0(\Omega,\mathbb{R}^2)$.
  \item $\mbox{det}\nabla \phi_{t,n}\underset{n\rightarrow +\infty}{\rightharpoonup} \delta_t$ in $L^4(\Omega)$ for all $t=1,\cdots,T$.
  \item $((\mathcal{C}u_n)\circ \phi_{t,n}^{-1}) \underset{n\rightarrow +\infty}{\longrightarrow} \alpha_t $ in $L^q(\Omega)$ for $q\in [1,2[$ and thus in $L^1(\Omega,g_t)$ with $\alpha_t\in BV(\Omega) \subset BV_{g_t}(\Omega)$. By continuity of the trace operator, we deduce that $\alpha_t\in BV_{g_t,0}(\Omega)$.
  \item $((\mathcal{C}u_n)\circ \phi_{t,n}^{-1}) \underset{n\rightarrow +\infty}{\rightharpoonup} \alpha_t$ in $L^2(\Omega)$ for each $t=1,\cdots,T$ by uniqueness of the weak limit in $L^1(\Omega)$ and the continuous embedding of $L^2(\Omega) \subset L^1(\Omega)$. Also, since $\alpha_t\in BV_{g_t,0}(\Omega)$, we can extend it by $0$ outside the domain $\Omega$ and we have $((\mathcal{C}u_n)\circ \phi_{t,n}^{-1})_e \underset{n\rightarrow +\infty}{\rightharpoonup} (\alpha_t)_e$ in $L^2(\mathbb{R}^2)$, for each $t=1,\cdots,T$.
  \end{itemize}
  Also, from \cite[Theorem 8.20]{debroux-bib:dacorogna}, we have that $\mbox{det}\nabla \phi_{t,n} \underset{n\rightarrow +\infty}{\rightharpoonup} \mbox{det}\nabla \bar{\phi}_t$ in $L^2(\Omega)$ for each $t=1,\cdots,T$, and by uniqueness of the weak limit in $L^2(\Omega)$ and the continuous embedding $L^4(\Omega)\subset L^2(\Omega)$, we deduce that $\delta_t=\mbox{det}\nabla \bar{\phi}_t$ for each $t=1,\cdots,T$. Also, since we have for $\tilde q=2+\frac{1}{2}>2$ : 
  \begin{align*}
  & \int_\Omega \|(\nabla \phi_{t,n})^{-1}(x)\|_F^{\tilde q}\mbox{det}\nabla \phi_{t,n}(x)\,dx\\
  &= \int_\Omega \|\nabla \phi_{t,n}(x)\|_F^{\tilde q}(\mbox{det}\nabla \phi_{t,n})^{1-\tilde q}\,dx,\\
   &\leq \|\nabla \phi_{t,n}\|_{L^4(\Omega,M_2(\mathbb{R}))}^{\tilde q} \left\|\frac{1}{\mbox{det}\nabla \phi_{t,n}}\right\|_{L^4(\Omega)}^{\tilde q-1},
  \end{align*}
using H\"older's inequality with $\tilde p=\frac{4}{\tilde q}$, $r=\frac{4}{\tilde q-1}$ and noticing that $\frac{4(\tilde q-1)}{4-\tilde q} = 4$. This quantity is uniformly bounded according to $n$ from what precedes and we deduce from \cite[Theorem 1 and 2]{corona-bib:ball} that $(\phi_{t,n})$ are bi-H\"older's homeomorphisms and therefore $\phi_{t,n}^{-1}$ exists.

\medskip
{\bf{Weak lower semi-continuity}}: $W_{Op}$ is convex and continuous. If $\psi_n\underset{n\rightarrow +\infty}{\longrightarrow}\bar{\psi}$ in $W^{1,4}(\Omega,\mathbb{R}^2)$, then $\nabla \psi_n \underset{n\rightarrow +\infty}{\longrightarrow} \nabla \bar{\psi}$ in $L^4(\Omega,M_2(\mathbb{R}))$ and we can extract a subsequence still denoted $(\nabla \psi_n)$ such that $\nabla \psi_n \underset{n\rightarrow +\infty}{\longrightarrow} \nabla \bar{\psi}$ almost everywhere in $\Omega$. If $\delta_n \underset{n\rightarrow +\infty}{\longrightarrow} \bar{\delta}$ in $L^4(\Omega)$, then we can extract a subsequence still denoted $(\delta_n)$ such that $\delta_n \underset{n\rightarrow +\infty}{\longrightarrow}\bar{\delta}$ almost everywhere in $\Omega$. Then by continuity of $W_{Op}$, we get that $W_{Op}(\nabla \psi_n(x),\delta_n(x)) \underset{n\rightarrow +\infty}{\longrightarrow} W_{Op}(\nabla \bar{\psi}(x),\bar{\delta}(x))$ almost everywhere in $\Omega$. Then by applying Fatou's lemma, we have that $\underset{n\rightarrow +\infty}{\lim \inf} \int_\Omega W_{Op}(\nabla \psi_n(x),\delta_n(x))\,dx \geq \int_\Omega W_{Op}(\nabla \bar{\psi}(x),\bar{\delta}(x))\,dx$. Since $W_{Op}$ is convex, so is $\int_\Omega W_{Op}(\xi(x),\delta(x))\,dx$, and we can apply \cite[Corollaire III.8]{debroux-bib:brezis} to get that $\int_\Omega W_{Op}(\xi(x),\delta(x))\,dx$ is lower semicontinuous in $L^4(\Omega,M_2(\mathbb{R}))\times L^4(\Omega)$. We deduce that $\underset{n\rightarrow +\infty}{\lim \inf}\, \int_\Omega W_{Op}(\nabla \phi_{t,n}(x), \mbox{det}\nabla \phi_{t,n}(x))\,dx\geq \int_\Omega W_{Op}(\nabla \bar{\phi}_t(x),\mbox{det}\nabla \bar{\phi}_t(x))\,dx$.\\
 Also since $(\mathcal{C}u_n)\circ \phi_{t,n}^{-1} \underset{n\rightarrow +\infty}{\longrightarrow} \alpha_t$ in $L^1(\Omega)$ and so in $L^1(\Omega,g_t)$, then by the semi-continuity theorem from \cite[Theorem 3.2]{debroux-bib:baldi}, we conclude that $\operatorname{TV}_{g_t}(\alpha_t)\leq \underset{n\rightarrow +\infty}{\lim \inf}\, \operatorname{TV}_{g_t}((\mathcal{C}u_n)\circ \phi_{t,n}^{-1})$ for all $t=1,\cdots,T$.\\
 $\mathcal{F}$ is a linear operator and continuous for the strong topology from $L^2(\mathbb{R}^2)$ to $L^2(\mathbb{R}^2)$. Therefore, by applying \cite[Theorem III.9]{debroux-bib:brezis}, $\mathcal{F}$ is continuous from the weak topology of $L^2(\mathbb{R}^2)$ to the weak topology of $L^2(\mathbb{R}^2)$. As $((\mathcal{C}u_n)\circ \phi_{t,n})_e \underset{n\rightarrow +\infty}{\rightharpoonup} (\alpha_t)_e$ in $L^2(\mathbb{R}^2)$, we deduce that $\mathcal{F}((\mathcal{C}u_n)\circ \phi_{t,n})_e\underset{n\rightarrow +\infty}{\rightharpoonup} \mathcal{F}(\alpha_t)_e$ and thus $\mathcal{F} ((\mathcal{C}u_n) \circ \phi_{t,n})_e - x_t \underset{n\rightarrow +\infty}{\rightharpoonup} \mathcal{F}(\alpha_t)_e-x_t $ in $L^2(\mathbb{R}^2)$. By the lower semi-continuity of the norm, we deduce that $\underset{n\rightarrow +\infty}{\lim \inf} \|\mathcal{F}((\mathcal{C}u_n)\circ \phi_{t,n})_e-x_t\|_{L^2(\mathbb{R}^2)}^2 \geq \|\mathcal{F}(\alpha_t)_e-x_t\|_{L^2(\mathbb{R}^2)}^2$.\\
  We now need to prove that $(\mathcal{C}u_n)\circ \phi_{t,n}^{-1}\circ \phi_{t,n}=\mathcal{C}u_n \underset{n\rightarrow +\infty}{\longrightarrow}\alpha_t\circ \bar{\phi}_t=\bar{U}$ in $L^p(\Omega)$ for all $t=1,\cdots,T$. We first have :
 \begin{align*}
  &\|(\mathcal{C}u_n)\circ \phi_{t,n}^{-1}\circ \phi_{t,n} - \alpha_t \circ \bar{\phi}_t\|_{L^p(\Omega)} \\
  &\leq \|(\mathcal{C}u_n) \circ \phi_{t,n}^{-1} \circ \phi_{t,n} - \alpha_t \circ \phi_{t,n}\|_{L^p(\Omega)} \\
  &+ \|\alpha_t \circ \phi_{t,n} - \alpha_t \circ \bar{\phi}_t\|_{L^p(\Omega)}.
  \end{align*}
  We now focus on the first term and make the change of variable $y= \phi_{t,n}(x) \Leftrightarrow x= \phi_{t,n}^{-1}(y)$ and $dy = \mbox{det}\nabla \phi_{t,n}(x)\,dx,\, dx = \frac{1}{\mbox{det}\nabla \phi_{t,n}(\phi_{t,n}^{-1}(y))}\,dy$ : 
  \begin{align*}
  &\int_\Omega |(\mathcal{C}u_n) \circ \phi_{t,n}^{-1} \circ \phi_{t,n} - \alpha_t \circ \phi_{t,n}|^p\,dx \\
  &= \int_\Omega |(\mathcal{C}u_n) \circ \phi_{t,n}^{-1} - \alpha_t |^p\frac{1}{|\mbox{det}\nabla \phi_{t,n}(\phi_{t,n}^{-1}(y))|}\,dy,\\
  &\leq \|(\mathcal{C}u_n)\circ \phi_{t,n}^{-1}-\alpha_t\|_{L^{\frac{5p}{4}}(\Omega)}^p(\int_\Omega \frac{1}{|\mbox{det}\nabla \phi_{t,n}(\phi_{t,n}^{-1}(y))|^5}\,dy)^{\frac{1}{5}},\\
  &\leq \|(\mathcal{C}u_n)\circ \phi_{t,n}^{-1}-\alpha_t\|_{L^{\frac{5p}{4}}(\Omega)}^p(\int_\Omega \frac{1}{|\mbox{det}\nabla \phi_{t,n}(x))|^4}\,dx)^{\frac{1}{5}},
  \end{align*}
  using H\"older's inequality and making another change of variable. Since $ (\mathcal{C}u_n)\circ \phi_{t,n}^{-1}\underset{n\rightarrow +\infty}{\longrightarrow}\alpha_t$ in $L^{\frac{5p}{4}}(\Omega)$, as $\frac{5p}{4}<2$, and $\|\frac{1}{\mbox{det}\nabla \phi_{t,n}}\|_{L^4(\Omega)}$ is uniformly bounded, we deduce that $ \int_\Omega |(\mathcal{C}u_n )\circ \phi_{t,n}^{-1} \circ \phi_{t,n} - \alpha_t \circ \phi_{t,n}|\,dx \underset{n\rightarrow +\infty}{\longrightarrow}0$.\\
  According to \cite[Theorem 6.70]{corona-bib:demengel}, there exists a sequence $(\xi_t^k)_k$ of functions in $\mathcal{C}^\infty_c(\Omega)$ such that $\|\alpha_t - \xi_t^k\|_{L^1(\Omega)} \underset{k\rightarrow +\infty}{\longrightarrow} 0$ and $\int_\Omega |\nabla \xi_t^k| \underset{k\rightarrow +\infty}{\longrightarrow} \int_\Omega |\nabla \alpha_t| + \int_{\partial \Omega } |\alpha_t|\,dx$ with $\int_{\partial \Omega } |\alpha_t|\,dx=0$ since $\alpha_t=0$ on $\partial \Omega$, for all $t=1,\cdots,T$. Thus $(\xi_t^k)$ is uniformly bounded according to $k$ in $BV(\Omega)\underset{c}{\subset} L^q(\Omega) $, for $q\in [1,2[$, and therefore $\xi_t^k \underset{k\rightarrow +\infty}{\longrightarrow} \alpha_t$ in $L^q(\Omega)$ for $q\in [1,2[$. Let $\varepsilon>0$. Thus we fix $N\in \mathbb{N}^*$ such that $\|\xi_t^N-\alpha_t\|_{L^{\frac{5p}{4}}(\Omega)}\leq \frac{\varepsilon}{3}$. We now have :
  \begin{align*}
  &\|\alpha_t \circ \phi_{t,n} - \alpha_t \circ \bar{\phi}_t\|_{L^p(\Omega)}\leq \|\alpha_t \circ \phi_{t,n} - \xi_t^N \circ \phi_{t,n}\|_{L^p(\Omega)} \\
  &+ \|\xi_t^N \circ \phi_{t,n} - \xi_t^N\circ \bar{\phi}_t\|_{L^p(\Omega)}+ \|\xi_t^N \circ \bar{\phi}_{t} - \alpha_t \circ \bar{\phi}_t\|_{L^p(\Omega)},\\
  &\leq (\int_\Omega |\alpha_t -\xi_t^N|^p\frac{1}{|\mbox{det}\nabla \phi_{t,n}(\phi_{t,n}^{-1}(y))|}\,dy)^{\frac{1}{p}} \\
  &+ L_\varepsilon\|\phi_{t,n} - \bar{\phi}_t\|_{L^p(\Omega,\mathbb{R}^2)} \\
  &+ (\int_\Omega |\xi_t^N-\alpha_t|^p\frac{1}{|\mbox{det}\nabla \bar{\phi}_t(\bar{\phi}_t^{-1}(y))|}\,dy)^{\frac{1}{p}},
 \end{align*}
 with $L_\varepsilon$ the Lipschitz constant of $\xi_t^N$ since it belongs to $\mathcal{C}^\infty_c(\Omega)$ and so is Lipschitz continuous, and using a change of variable. As $\phi_{t,n} \underset{n \rightarrow +\infty}{\rightharpoonup} \bar{\phi}_t$ in $W^{1,4}(\Omega) \underset{c}{\subset} L^p(\Omega,\mathbb{R}^2)$, there exists $K\in \mathbb{N}^*$ such that for any $n\geq K$, $\|\phi_{t,n} -\bar{\phi}_t\|_{L^p(\Omega,\mathbb{R}^2)}\leq \frac{\varepsilon}{3L_\varepsilon}$. From now on, we assume $n$ satisfies $n\geq K$, and we use H\"older's inequality : 
 \begin{align*}
   &\|\alpha_t \circ \phi_{t,n} - \alpha_t \circ \bar{\phi}_t\|_{L^p(\Omega)}\leq \|\alpha_t -\xi_t^N\|_{L^{\frac{5p}{4}}(\Omega)}\\
   &(\int_\Omega\frac{1}{|\mbox{det}\nabla \phi_{t,n}(\phi_{t,n}^{-1}(y))|^{5}}\,dy)^{\frac{1}{5p}} + \frac{\varepsilon}{3} \\
   &+\|\xi_t^N-\alpha_t\|_{L^{\frac{5p}{4}}(\Omega)}(\int_\Omega \frac{1}{|\mbox{det}\nabla \bar{\phi}_t(\bar{\phi}_t^{-1}(y))|^5}\,dy)^{\frac{1}{5p}},\\
    &\leq \frac{\varepsilon}{3}\|\frac{1}{\mbox{det}\nabla \phi_{t,n}}\|_{L^4(\Omega)}^{\frac{4}{5p}} + \frac{\varepsilon}{3} +\frac{\varepsilon}{3}\|\frac{1}{\mbox{det}\nabla \bar{\phi}_t}\|_{L^4(\Omega)}^{\frac{4}{5p}},
   \end{align*}
   with $\|\frac{1}{\mbox{det}\nabla \phi_{t,n}}\|_{L^4(\Omega)}$ uniformly bounded according to $n$ and $\| \frac{1}{\mbox{det}\nabla \bar{\phi}_t}\|_{L^4(\Omega)}$ bounded from what precedes. So by letting $\varepsilon$ tend to $0$, we obtain that $ \int_\Omega |\alpha_t \circ \phi_{t,n} - \alpha_t \circ \bar{\phi}_t|^p\,dx \underset{n\rightarrow +\infty}{\longrightarrow}0$ and consequently $ (\mathcal{C}u_n)\circ \phi_{t,n}^{-1} \circ \phi_{t,n} = \mathcal{C}u_n \underset{n\rightarrow+\infty}{\longrightarrow} \alpha_t \circ \bar{\phi}_t$ for all $t=1,\cdots,N$. By uniqueness of the limit, we have that $\bar{U}=\alpha_t\circ\bar{\phi}_t \Leftrightarrow \alpha_t=\bar{U}\circ \bar{\phi}_t^{-1}$ for all $t=1,\cdots,T$ in $L^p(\Omega)$, with $\bar{U}\in L^p(\Omega)$ and $ \bar{U}\circ \bar{\phi}_t^{-1} \in BV_{g_t,0}(\Omega)$. \\
 We now set $\tilde u_n = \frac{1}{|\Omega'|}\int_{\Omega'}u_n\,dx$, and $u_{0,n}=u_n-\tilde u_n$. We clearly have $\int_{\Omega'}u_{0,n}\,dx=0$ for all $n$, and $\operatorname{TV}(u_{0,n})=\operatorname{TV}(u_n)$ is uniformly bounded thanks to the coercivity inequality. We denote this uniform bound by $\nu$. Thus using Poincar\'e-Wirtinger's inequality we obtain
 \begin{align*}
  \|u_{0,n}\|_{L^1(\Omega')}\leq c_1 \operatorname{TV}(u_{0,n})\leq c_1\nu,
 \end{align*}
 with $c_1>0$. We now need a bound for $\|\mathcal{C}\tilde u_n\|_{L^p(\Omega)}$:
 \begin{align*}
 & \|\mathcal{C}\tilde u_n\|_{L^p(\Omega)}^2-2\|\mathcal{C}\tilde u_n\|_{L^p(\Omega)}\|\mathcal{C}\|_{\mathcal{L}(L^1(\Omega'),L^p(\Omega))}\|u_{0,n}\|_{L^1(\Omega')}\\
 &\leq \|\mathcal{C}\tilde u_n\|_{L^p(\Omega)}^2-2\|\mathcal{C}\tilde u_n\|_{L^p(\Omega)}\|\mathcal{C}u_{0,n}\|_{L^p(\Omega)},\\
  &\leq (\|\mathcal{C}\tilde u_n\|_{L^p(\Omega)}-\|\mathcal{C}u_{0,n}\|_{L^p(\Omega)})^2,\\
  &\leq \|\mathcal{C}(\tilde u_n + u_{0,n})\|_{L^p(\Omega)}^2,\\
  &\leq \|\mathcal{C}u_n\|_{L^p(\Omega)}^2 \leq c_2^2<+\infty,
 \end{align*}
 as $\mathcal{C}u_n$ strongly converges in $L^p(\Omega)$. Besides, we have $ \|\mathcal{C}\|_{\mathcal{L}(L^1(\Omega'),L^p(\Omega))}\|u_{0,n}\|_{L^1(\Omega')} \leq \|\mathcal{C}\|_{\mathcal{L}(L^1(\Omega'),L^p(\Omega))}c_1\nu=c_3<+\infty$ and thus $\|\mathcal{C}\tilde u_n\|_{L^p(\Omega)} \leq 2\sqrt{c_3^2+c_2^2}$. But we also know that $\|\mathcal{C}\tilde u_{n}\|_{L^p(\Omega)} = \frac{1}{|\Omega'|}|\int_{\Omega'} u_n\,dx|\|\mathcal{C}\mathbb{1}\|_{L^p(\Omega)}\leq 2\sqrt{c_2^2+c_3^2}$. Since $\mathcal{C}\mathbb{1}\neq 0$, we have $\frac{1}{|\Omega'|}|\int_{\Omega'}u_n\,dx|\leq \frac{2\sqrt{c_2^2+c_3^2}}{\|\mathcal{C}\mathbb{1}\|_{L^p(\Omega)}}=c_4<+\infty $. We therefore have
 \begin{align*}
  \|u_n\|_{L^1(\Omega')}&\leq \|u_{0,n}+\frac{1}{|\Omega'|}|\int_{\Omega'}u_n\,dx|\|_{L^1(\Omega')},\\
  &\leq \|u_{0,n}\|_{L^1(\Omega')}+|\int_{\Omega'}u_n\,dx|,\\
  &\leq c_1\nu+c_4|\Omega'|<+\infty.
 \end{align*}
 Thus $u_n$ is uniformly bounded according to $n$ in $BV(\Omega')$ and there exists a subsequence still denoted $(u_n)$ such that $u_n\underset{n\rightarrow +\infty}{\longrightarrow}\bar{u}$ in $L^1(\Omega')$ with $\bar{u}\in BV(\Omega')$. By continuity of the operator $\mathcal{C}$ and the uniqueness of the limit, we deduce that $\mathcal{C}u_n \underset{n\rightarrow +\infty}{\longrightarrow} \mathcal{C}\bar{u} = \bar{U}$ in $L^p(\Omega)$. By the semi-continuity theorem, we get $\operatorname{TV}(\bar{u}) \leq \underset{n\rightarrow +\infty}{\lim\inf}\, \operatorname{TV}(u_n) $.\\ 
    By combining all the results, we obtain that $+\infty > \underset{(u,(\phi_t)_{t=1,\cdots,T})\in \mathcal{U}}{\inf} G(u,(\phi_t)_{t=1,\cdots,T}) = \underset{n\rightarrow +\infty}{\lim \inf}\, G(u_n,(\phi_{t,n})_{t=1,\cdots,T}) \geq \frac{1}{T}\underset{t=1}{\overset{T}{\sum}}\delta \operatorname{TV}_{g_t}((\mathcal{C}\bar{u})\circ\bar{\phi}_t^{-1}) + \frac{1}{2}\|\mathcal{F}((\mathcal{C}\bar{u})\circ \bar{\phi}_t^{-1})_e-x_t\|_{L^2(\mathbb{R}^2)}^2+\int_\Omega W_{Op}(\nabla \bar{\phi}_t)\,dx + \alpha \operatorname{TV}(\bar{u})$.
  It thus means that $\mbox{det}\nabla \bar{\phi}_t\in L^4(\Omega)$, $\frac{1}{\mbox{det}\nabla \bar{\phi}_t}\in L^4(\Omega)$, and $\mbox{det}\nabla \bar{\phi}_t >0$ almost everywhere in $\Omega$ for all $t=1,\cdots,T$. Indeed, since $W_{Op}(\nabla \bar{\phi}_t(x),\mbox{det}\nabla \bar{\phi}_t(x))=+\infty$ when $\mbox{det}\nabla \bar{\phi}_t(x)\leq 0$, it means that the set on which it happens must be of null measure otherwise we would have $ \int_\Omega W_{Op}(\nabla \bar{\phi}_t,\mbox{det}\nabla \bar{\phi}_t)\,dx = +\infty$. Also, by applying the same reasoning for each $\phi_{t,n}$, we prove that $\bar{\phi}_t$ is a bi-H\"older homeomorphism and have that $\bar{\phi}_t\in \mathcal{W}$ for each $t=1,\cdots,T$.\\ 
  We thus have proved the existence of minimisers for our problem (4).

\end{proof}

\end{document}